\def \r{\mathbb R}
\def \({\langle}
\def \){\rangle}
\newtheorem{theorem}{Theorem}[section]
\newtheorem{lemma}[theorem]{Lemma}
\newtheorem{proposition}[theorem]{Proposition}
\newtheorem{corollary}[theorem]{Corollary}
\theoremstyle{remark}
\newtheorem{remark}[theorem]{Remark}
\theoremstyle{definition}
\newtheorem{definition}[theorem]{Definition}
\title{Euler elasticae in the plane and the Whitney--Graustein theorem}
\author{S. Avvakumov, O. Karpenkov, A. Sossinsky}
\date{7 November 2012}
\keywords{curvature minimization, Euler functional,
Whitney--Graustein theorem}
\email[Oleg Karpenkov]{karpenkov@tugraz.at}
\email[Alexey Sossinsky]{asossinsky@yandex.ru}
\email[Sergey Avvakumov]{s.avvakumov@gmail.com}
\begin{document}
\input{epsf}

\small{
\begin{abstract}
In this paper, we apply classical energy principles to Euler elasticae, i.e., closed
$\mathcal C^2$ curves in the plane supplied with the Euler functional $U$ (the integral of the square of the curvature along the curve). We study the critical points of  $U$, find the shapes of the curves corresponding to these critical points and show which of the critical points are stable equilibrium points of the energy given by $U$, and which are unstable.
It turns out that the set of stable equilibrium points coincides with the set of minima of
$U$, so that the corresponding shapes of the curves obtained may be regarded as normal forms of Euler elasticae. In this way, we find the solution of the Euler problem
(set in 1744) for plane closed elasticae.  As a by-product, we obtain a ``mechanical" proof of the Whitney--Graustein theorem on the classification of regular curves in the plane up to regular homotopy (in the particular case of $\mathcal C^2$ curves). Besides mathematical theorems, our work includes a computer graphics software which shows, as an animation,  how any plane curve evolves to its normal form under a discretized version of gradient descent along the (discretized) Euler functional.
\end{abstract}
}

\maketitle

\section*{Introduction}

The aim of this paper is to test how the energy functional approach works on
the moduli space of all regular $\mathcal C^2$ curves in the plane $\r^2$
in the case of the Euler functional
$$
U(\gamma) = \int _0^{2\pi}\, \big( \kappa(\gamma(s)) \big)^2 ds,
$$
where $\gamma:\mathbb S^1 \to \r^2$ is a curve of length $2\pi$, $s$ is the arclength parameter, and $\kappa(\gamma(s)$ is the curvature of $\gamma$ at the point $s$.
Thus we view regular homotopy classes of regular plane
curves through  the prism of the Euler functional.

\vspace{1mm}

Let us start with a general remark. Suppose
that one has a configuration space $S$ that is split in several
connected components and the main task is {\it to check if two
elements of $S$ are in the same connected component}.

Then there are two main strategies to do this: combinatorial and
mechanical. The {\it combinatorial strategy} is based on finding
a (hopefully complete) invariant distinguishing connected components.
The {\it mechanical strategy} is as follows: one finds an appropriate
functional (which is sometimes called {\it energy}); then gradient or
descent flow on $S$ is performed along the functional

\noindent and the problem is reduced to comparing local minima of
this flow (these minima are called {\it normal forms}). For smooth
enough moduli spaces $S$ and an appropriate choice of energy
functionals, the gradient flows corresponding to these energies
take almost all (except for a measure zero set) configurations of
$S$ to the corresponding local minima.

\vspace{1mm}

A good example where both strategies are applied is the theory of
knots and links. The combinatorial approach is broadly studied in
classical knot theory, where many beautiful invariants have
been devised (Alexander and Jones polynomials, Vassiliev invariants,
Khovanov homology, etc.). The mechanical strategy, i.e., the
idea of defining energy functionals
for knots is due to H.~K.~Moffat~\cite{Mof}. It was further
developed by W.~Fukuhara~\cite{Fuku}, J.~O'Hara in~\cite{O-H1},
\cite{O-H2}, \cite{O-H3}, \cite{O-H4}, M.~H.~Freedman with various co-authors
in~\cite{Freed2}, \cite{Freed} ~\cite{Freed3}, by
D.~Kim, R.~Kusner in~\cite{Kim}, O.~Karpenkov in~\cite{EKar1},
\cite{EKar2}, etc. Some aspects of the intermediate step between
classical and mechanical approaches is discussed in~\cite{KS}.
In practice, the energy techniques work well, but the mathematical
justification of theoretical questions appears too complicated to be
resolved at the present time. For instance, it is not proven that the
unknot has a unique local minimum with respect to the famous M\"obius
functional invented by O'Hara.

\vspace{1mm}

In this paper, rather than applying the energy techniques to knots
and links, we test the mechanical approach on
regular curves in the plane, which are simpler objects than knots
(to which we intend to return in subsequent publications).
The combinatorial approach to the study of regular curves
in the plane yields the classical Whitney--Graustein theorem, which
provides a classification of curves up to regular homotopy by means
of a simple complete invariant -- the {\it Whitney index} or {\it winding
number}, which is the number of revolutions effected by the tangent
vector to the curve at a mobile point when the mobile point goes once
around the curve. Here we discuss the mechanical
approach to the study of regular plane curves of class $\mathcal C^2$
using the Euler functional $U$
(the integral along the curve of the square of the curvature) and
describe the corresponding normal forms with respect to $U$. It turns
out that each regular homotopy class has a unique normal form
(Theorem~\ref{main}) and the normal form is a complete invariant
of regular homotopy classes. In the case of a nonzero Whitney index,
the normal form is a circle passed once or several times, otherwise it is
Bernoulli's closed $\infty$-shaped elastica (the figure eight curve). These normal
forms are obtained by gradient descent along values of $U$ in the space
of curves. Computer animations of the discretization of this process
(created by the first-named author) are available at [1].

\vspace{1mm}

This paper is organized as follows. We start in
Section~1 with the necessary definitions and preliminaries. In
Section~2, we formulate the main results of this paper (Theorems 1.3
and 2.2) on normal forms of regular curves. In Section 3,
we describe the algorithm on which the animations
are based. Finally, in Section~4, we give technical details and proofs.

In more detail, Section~4, in which critical points of the Euler
functional $U$ are studied, consists of five subsections. In
Subsection~4.1, we discuss the relationship of critical points to
the simple pendulum. We study differentiability questions of
critical curves in Subsection~4.2. In Subsection~4.3, we prove
that all critical curves are either circles passed once or several
times or Bernoulli's closed elastica passed once or several times.
We construct a deformation of critical Bernoulli's closed elastica
(passed more than once) reducing the value of Euler's functional
in Subsection~4.4. Finally, we conclude the proofs of
Theorem~\ref{shapes} and Theorem~\ref{main} in Subsection~4.5.

\section{Preliminaries}

\subsection{Gauss representation of regular curves}

A plane closed curve $\gamma : \mathbb S^1 \to \mathbb R^2$ is
{\it regular} if it possesses a nonzero derivative (tangent
vector) at each point $\gamma(t)$.

For a regular curve $\gamma$, we consider a function
$\alpha:[0,2\pi]\to\r$ such that
$$
\dot\gamma(t)=(\cos\alpha(t),\sin\alpha(t)).
$$
We then say that $\alpha$ is a {\it Gauss representation} of
$\gamma$. (Here and later by $\dot g$ we denote the derivative
${\partial g}/{\partial t}$.) Notice that a pair of curves having
the same Gauss representations coincide after a translation by
some vector.

\vspace{1mm}

In the Gauss representation, we have $\kappa=\dot \alpha$ and hence
the Euler functional is as follows:
$$
U(\alpha)=\int\limits_0^{2\pi}\dot \alpha^2 dt.
$$

\begin{proposition}
A continuous function $\alpha$ is a Gauss representation of some
regular curve if the following conditions hold:
\begin{equation}\label{conditions}
\begin{array}{c}
\int\limits_0^{2\pi}\cos\alpha(t)dt=\int\limits_0^{2\pi}\sin\alpha(t)dt=0;\\
\alpha(0)=\alpha(2\pi).
\end{array}
\end{equation}
\end{proposition}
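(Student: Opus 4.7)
The plan is to construct $\gamma$ explicitly by antidifferentiating the tangent field prescribed by $\alpha$. Concretely, I would set
$$\gamma(t) := \Big(\int_0^t \cos\alpha(s)\, ds,\ \int_0^t \sin\alpha(s)\, ds\Big), \qquad t \in [0, 2\pi],$$
and then verify in turn that (a) $\gamma$ is a regular curve having $\alpha$ as its Gauss representation, and (b) $\gamma$ descends to a well-defined map on $\mathbb{S}^1$.

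For (a), continuity of $\alpha$ makes $\gamma$ a $\mathcal{C}^1$ map by the fundamental theorem of calculus, and differentiation under the integral yields $\dot\gamma(t) = (\cos\alpha(t), \sin\alpha(t))$. This vector has unit length, so it is nowhere zero; hence $\gamma$ is regular and $\alpha$ is by construction its Gauss representation. For (b), the two gluing requirements at the identified endpoints $0 \sim 2\pi$ of $\mathbb{S}^1$ must be checked separately: the vanishing of the two integrals in (\ref{conditions}) forces $\gamma(2\pi) - \gamma(0) = (0,0)$, so the positions agree, and the hypothesis $\alpha(0) = \alpha(2\pi)$ forces $\dot\gamma(0) = \dot\gamma(2\pi)$, so the tangent vectors agree. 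Together these give a $\mathcal{C}^1$ regular closed curve on $\mathbb{S}^1$, as required.

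I do not anticipate a real obstacle here, since the argument is essentially an existence statement proved by writing down a primitive. The only point I would emphasize in the write-up is the complementary role of the two hypotheses: the integral conditions guarantee closedness as a map of point sets, whereas the equality $\alpha(0) = \alpha(2\pi)$ guarantees that the closing is $\mathcal{C}^1$ rather than merely continuous (otherwise $\gamma$ could have a corner at the basepoint). It is worth flagging that the converse direction is more delicate and not being claimed: for a regular closed curve of nonzero winding number, a continuous lift $\alpha : [0,2\pi] \to \r$ of the tangent angle satisfies $\alpha(2\pi) - \alpha(0) \in 2\pi\z \setminus \{0\}$, so the present sufficient condition covers only the winding-number-zero regime.
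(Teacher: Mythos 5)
Your construction is correct and is exactly the ``obvious'' proof the paper omits: antidifferentiate the prescribed unit tangent field, use the two integral conditions to close the curve as a point set, and use $\alpha(0)=\alpha(2\pi)$ to make the closing $\mathcal C^1$. Your closing observation is also well taken -- read literally, the hypothesis $\alpha(0)=\alpha(2\pi)$ restricts to winding number zero, and the paper surely intends equality modulo $2\pi$ -- but this concerns the statement rather than the proof.
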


The proof is obvious. \qed

\subsection{Normal forms of regular curves}

The functional
$$
\int\limits_0^{2\pi}\kappa^2(\gamma(t))dt,
$$
defined for any plane closed curve of class $\mathcal C^2$
is called the {\it Euler functional}; we denote it by $U$. (Notice
that in our previous paper~\cite{KS} we considered a wider class of
functionals. The Euler functional was denoted there by $U_{x^2}$). A
curve supplied with the Euler functional is traditionally (see e.g. [10] 
called an {\it Euler elastica}.

\begin{definition} The {\it normal form} of a regular plane $\mathcal C^2$ curve with
respect to the Euler functional is a plane curve for which the value of $U$
is a local minimum.
\end{definition}

Notice that {\it unstable equilibrium points} for $U$ do not not give
normal forms. Gradient descent reaches such points with zero
probability (such an event  practically never occurs in real life).

\medskip
The main result of the present paper is the following theorem.

\medskip
\begin{theorem}\label{shapes}
{\bf (i)} Any critical regular curve of the Euler functional is
either a circle passed several times or Bernoulli's
$\infty$-shaped closed elastica passed several times:

$$ \epsfbox{solution.3}. $$

\bigskip
{\bf (ii)} A Bernoulli closed $\infty$-shaped elastica passed several times is
not stable.

{\bf (iii)} A circle passed once or several times and the Bernoulli
closed elastica passed once are local minima.

\end{theorem}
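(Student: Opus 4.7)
Work throughout in the Gauss representation $\alpha(t)$, so that $U(\alpha)=\int_0^{2\pi}\dot\alpha^2\,dt$ and closed regular curves of Whitney index $m$ correspond to
$$
\mathcal C_m=\Big\{\alpha\in H^1([0,2\pi]):\ \alpha(2\pi)-\alpha(0)=2\pi m,\ \int_0^{2\pi}\!\cos\alpha\,dt=\int_0^{2\pi}\!\sin\alpha\,dt=0\Big\}.
$$
Introducing Lagrange multipliers $\lambda_1,\lambda_2$ for the two integral constraints of~\eqref{conditions}, a critical $\alpha$ satisfies $2\ddot\alpha=-\lambda_1\sin\alpha+\lambda_2\cos\alpha$, which after a rotation of the plane (a symmetry of both $U$ and $\mathcal C_m$) becomes the pendulum equation $\ddot\alpha=C\sin(\alpha-\alpha_0)$. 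A standard bootstrap argument gives the required smoothness of $\alpha$ (handled in Subsection 4.2). The periodic trajectories of the pendulum are uniform motion ($C=0$), monotone circulation, and libration; I would then check that the closure integrals rule out every non-uniform circulation and every libration amplitude save for the distinguished one whose geometric realization is the Bernoulli $\infty$-elastica, and that rescaling the period by an integer allows the shape to be traversed several times.

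\textbf{Parts (ii) and (iii).} For (ii), the plan is to exhibit an explicit admissible perturbation of the $n$-fold figure-eight ($n\geq 2$) that strictly decreases $U$: break the symmetry between the $n$ identical copies of the elastica---enlarging one copy while shrinking another---using the two closure integrals as the two degrees of freedom needed to remain inside $\mathcal C_0$, and verify via a first-variation computation (Subsection 4.4) that $U$ actually drops. For (iii), apply the direct method on each $\mathcal C_m$: a minimizing sequence is bounded in $H^1$, has a weakly convergent subsequence with limit $\alpha^\ast\in\mathcal C_m$, and $U$ is weakly lower semicontinuous, so the infimum is attained. By part~(i) the minimizer is a multiple circle or a multiple figure-eight; since circles have Whitney index $\pm n$ and figure-eights have Whitney index $0$, the minimizer on $\mathcal C_m$ is the $|m|$-fold circle when $m\neq 0$ and, by part~(ii), the once-traversed figure-eight when $m=0$. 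Global minima are \emph{a fortiori} local minima.

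\textbf{Main obstacle.} The hardest step is the endgame of part~(i): ruling out spurious critical $\alpha$ among the circulating and librating pendulum orbits, i.e., proving that the closure integrals $\int\cos\alpha=\int\sin\alpha=0$ together with the prescribed boundary behavior of $\alpha$ select only the uniform rotation and the specific libration whose geometric realization is the $\infty$-elastica. This reduces to identities among complete elliptic integrals attached to the pendulum's first integral, and must be done carefully enough to exclude all other amplitudes. The construction of the energy-decreasing deformation in~(ii) is the second delicate step, since it must preserve both closure conditions to first order while producing $\delta U<0$; the quartet of free parameters available when deforming two separate lobe-copies is exactly what makes this feasible.
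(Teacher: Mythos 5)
Your overall architecture for part (i) matches the paper's (smoothness, Lagrange multipliers, reduction to the pendulum equation), and your direct-method argument for part (iii) is sound and in fact tighter than the paper's own one-line justification; but two genuine gaps remain. In part (i) you defer the decisive step --- showing that among all librating pendulum orbits only one amplitude produces a closed curve, and that it is the Bernoulli $\infty$-elastica --- to unproven ``identities among complete elliptic integrals.'' The paper does not proceed that way: it proves uniqueness (Proposition~\ref{uniqueness}) by a geometric comparison of quarter-arcs of two hypothetical distinct closed solutions. If the arcs cross, the 2-germ similarity statement (Proposition~\ref{curv}) produces a common value of $(\alpha,\dot\alpha)$, forcing the two first integrals $\dot\alpha^2=2\cos\alpha+C$ to coincide; if they do not cross, convexity makes the outer arc at least as long as the inner one, contradicting the monotonicity of the pendulum period in the amplitude (Lemma~\ref{time}). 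Some argument of this kind must actually be supplied; your proposal only names the difficulty.

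The more serious problem is part (ii). Since the $n$-fold figure-eight is a critical point, the first variation of $U$ vanishes for \emph{every} admissible perturbation, so no ``first-variation computation'' can show that $U$ drops; you need either a direction in which the second variation is negative or an explicit nearby admissible curve with strictly smaller energy. Worse, the specific deformation you propose goes the wrong way: under a homothety of ratio $\lambda$ the functional scales as $U\mapsto U/\lambda$, so if one copy of the elastica is enlarged to length $L_1$ and another shrunk to length $L_2$ with $L_1+L_2$ fixed, the energy $c\left(1/L_1+1/L_2\right)$ is \emph{minimized} at $L_1=L_2$; breaking the symmetry between equal copies strictly increases $U$. The paper instead constructs an explicit $C^1$ immersion $\Gamma_\varepsilon$ (Proposition~\ref{unst}) in which two loops of the double elastica are pulled apart and reconnected by straight, tangent line segments; one then checks that the total length strictly decreases while the curvature at corresponding points does not increase, giving $U(\Gamma_\varepsilon)<U(\Gamma_0)$. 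An argument of that nature --- or a genuine second-variation computation --- is what your proof of (ii) is missing.
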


This result gives a complete solution of the Euler problem for closed elasticas in the
plane. Note that the same result was obtained independently by Yu.Sachkov (see
[19]) by a different, much more laborious method (involving the Pontryagin maximum principle).

\section{The Whitney--Graustein theorem via Euler elasticas}

\begin{definition} The {\it Whitney index} of a regular plane curve $\gamma$ is the
degree of the corresponding Gauss map, taken with a sign (i.e., the number of
clockwise rotations effected by the tangent vector when its origin travels once around the curve). We denote it by
$\omega(\gamma)$.
\end{definition}

For a Gauss representation $\alpha$ of $\gamma$, the following obviously holds:
$$
\omega(\gamma)=\int\limits_\gamma \alpha(t) dt.
$$

\vspace{2mm}

We recall the classical statement of the Whitney--Graustein
theorem: {\it Regular plane curves are classified up to regular
homotopy by their Whitney index, i.e., two regular plane curves
are regularly homotopic if and only if they have the same Whitney
index.}

The following statement immediately follows from
Theorem~\ref{shapes} below and the fact that the normal form
curves appearing in Theorem~\ref{shapes} have Whitney indices
$0, \pm 1, \pm 2, \dots ,$ $\pm n, \dots$ respectively.

\begin{theorem}\label{main}{\bf (On Euler normal forms.)}
Two regular plane curves of class $\mathcal C^2$ are regularly homotopic if and only if
they have the same normal form with respect to the Euler functional
$U$.
\end{theorem}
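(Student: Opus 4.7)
My plan is to deduce Theorem~\ref{main} from Theorem~\ref{shapes} together with one extra ingredient: every regular $\mathcal C^2$ curve is regularly homotopic to its normal form via gradient descent along $U$. Combined with the invariance of the Whitney index under regular homotopy, this reduces the statement to identifying normal forms through $\omega$.

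First I would enumerate the possible normal forms. By Theorem~\ref{shapes}(i)--(iii), a normal form is either a circle traversed $|n|$ times for a positive integer $n$ (orientation supplying the sign) or Bernoulli's $\infty$-shaped elastica traversed once. Reading $\omega(\gamma)=\frac{1}{2\pi}\int_0^{2\pi}\dot\alpha\,dt$ from a Gauss representation, the $n$-fold oriented circle has Whitney index $n$, while the figure-eight has Whitney index $0$. Hence normal forms are in bijection with $\mathbb Z$ via $\omega$.

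Next I would verify the two implications. For the \emph{only if} direction, if $\gamma_1$ and $\gamma_2$ are regularly homotopic then $\omega(\gamma_1)=\omega(\gamma_2)$ by homotopy invariance of the degree of the tangent map. Applying gradient descent of $U$ to each $\gamma_j$ produces a regular homotopy terminating at a local minimum $\gamma_j^\ast$ with $\omega(\gamma_j^\ast)=\omega(\gamma_j)$; the unstable multiply-traced figure-eights are avoided with probability one by Theorem~\ref{shapes}(ii), and if the descent does land on such a curve, the explicit $U$-reducing deformation produced in Subsection~4.4 lets one move off it. Since $\omega$ identifies normal forms uniquely, this forces $\gamma_1^\ast=\gamma_2^\ast$. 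For the \emph{if} direction, given a common normal form, concatenating the descent of $\gamma_1$ with the time-reversed descent of $\gamma_2$ exhibits a regular homotopy from $\gamma_1$ to $\gamma_2$.

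The main obstacle I foresee is giving genuine meaning to the gradient descent step: one must show that the flow exists in the infinite-dimensional space of regular $\mathcal C^2$ curves, that regularity is preserved along it (no tangent vector vanishes, no $\mathcal C^2$-blowup of $\kappa$ occurs), that it sub\-sequentially converges to a critical point, and that the limit is, generically or after perturbation by the Subsection~4.4 deformation, one of the stable critical shapes classified in Theorem~\ref{shapes}. Once this functional-analytic machinery is in place, everything else in the proof is bookkeeping with the Whitney index.
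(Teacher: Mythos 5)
Your plan is sound and, at the level of bookkeeping, matches the paper's: Theorem~\ref{shapes} lists the normal forms, these are distinguished by their Whitney indices $0,\pm1,\pm2,\dots$, and the invariance of $\omega$ under regular homotopy shows that no regular homotopy class can contain two distinct normal forms. The difference lies in how one establishes that the normal-form map $\gamma\mapsto\gamma^\ast$ is defined on every class and stays inside that class. You route this through gradient descent of $U$, which would in principle make the argument self-contained and would genuinely reprove the Whitney--Graustein theorem; but, as you yourself flag, the existence, regularity preservation, and convergence of this flow on the space of $\mathcal C^2$ immersions is exactly what is \emph{not} proved anywhere in the paper --- Remark~2.3 explicitly defers both ``the rigorous construction of the gradient flow'' and the proof that ``such flows always end up in a smooth critical realization'' to future work, and the software of Section~3 is only an experimental discretization. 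The paper's own (one-line) proof avoids this by taking the classical Whitney--Graustein theorem, recalled immediately before Theorem~\ref{main}, as an input: regular homotopy classes are indexed by $\omega\in\mathbb Z$, the normal forms of Theorem~\ref{shapes} are also indexed by $\omega\in\mathbb Z$, hence each class contains a unique normal form. That version is rigorous modulo Theorem~\ref{shapes}, at the price of making the advertised ``mechanical proof of Whitney--Graustein'' circular. So your reduction to Theorem~\ref{shapes} is the intended one, but the descent step you lean on is a genuine gap that cannot be discharged with the tools in the paper; to obtain a complete proof of the stated theorem you should either import Whitney--Graustein as the paper implicitly does, or supply the missing functional-analytic construction of the flow.
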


\begin{remark}
It is not hard to construct different descent flows, we omit the
related analysis. The software [1]
described in the next section gives a
practical, fast (a few seconds for reasonably simple curves), and
very visual method for determining the regular homotopy type of a
plane curve: one simply observes the evolution of the curve until
it reaches its normal form, which determines the regular homotopy type.
In general it would be interesting to have a constructive proof that
such flows always end up in a smooth critical realization. The
second question which we do not touch here is the rigorous construction of
the gradient flow.
\end{remark}

\section{Normal forms via computer animations}

In this section we describe our computer software, which shows how
a given curve evolves to its normal form, and describe the
underlying algorithm.

The software may be downloaded from [1].
It is a user friendly interactive animation and can be viewed on
any PC running with Windows XP (or any later version). Once the
program (exe file) is downloaded and activated, the user simply
draws the required curve on the screen with the mouse (or by means
of the touch pad). The curve begins to evolve, progressively
changing its shape until it arrives to its normal form -- a circle
of radius $\rho,\rho/2, \rho/3, \dots$, or a Bernoulli
$\infty$-shaped curve of length $2\pi \rho$. The evolution of the
curve in the animation to its normal form takes a minute or a few
minutes for moderately complicated curves.

\medskip
The descent algorithm, which is an iterative process, can be
briefly described as follows. First, the computer transforms the
input curve into a closed polygonal line (usually with
self-intersections) with edges $e_i$ of the same (tiny) length and
vertices $v_i$, where $e_i = [v_{i-1},v_i]$, $i=1,\dots N$; here
$N$ is a parameter of our program.

 The discrete version of our functional is

$$
\widehat U := \sum _i \, \tan ^2(\alpha_i / 2),
$$
where $\alpha_i$ is the angle between the continuation of the edge
$e_i$ and the edge $e_{i+1}$.

At each step of the algorithm, for each vertex $v_i$, two
``forces'' (vectors) $s_i$ and $r_i$ are computed. The vector
$s_i$ (the {\it straightening out force}) is calculated according
to the formula
$$
s_i=-C_1\Big (\frac {\partial}{\partial x_i} + \frac
{\partial}{\partial y_i} \Big ) \big (\widehat U \big ),
$$
where $v_i=(x_i,y_i)$ and $C_1$ is a positive constant; here the
partial derivatives ${\partial \widehat U}/{\partial x_i}$ and
${\partial \widehat U}/{\partial y_i}$ are calculated
approximately as finite differences.

The vector $r_i$ (the {\it  resilience force}) is calculated
according to the formula
$$
r_i := C_2( v_{i+1} -  v_{i})(| v_{i+1} - v_{i}|-d_i) +
C_2(v_{i-1} -  v_{i})(|v_i - v_{i-1}|-d_{i-1}),
$$
where $C_2$ is a positive constant, $|v|$ is the Euclidean norm of
the vector $v$, and $d_i$ is the Euclidean distance between the
vertices $v_i$ and $v_{i+1}$ at the initial moment.

Then each of the points $v_i$ is shifted by the vector $s_i +
r_i$, a  new polygonal line is obtained,
 and the algorithm goes on to the next step.

The constants $C_1$ and $C_2$ are parameters of our program and
are chosen so that the lengths of the edges, as well as the total
length of the curve, do not change significantly during the
descent process.

The iteration process continues endlessly (there is no termination
command in the program). However, after a short interval of time,
not more than a few minutes in all our experiments (performed with
$N=100$), the modifications in the shape of the curve become
invisible to the naked eye. We then consider the iteration process
as terminated and regard the shape of the obtained curve as the
{\it normal form} of the input curve. In all our experiments, the
obtained curve was always one of the normal forms predicted by
Theorem 1.3.

Note that this result is not a mathematical theorem, but an
experimental fact. We intend to return to its mathematical
justification in subsequent publications.

\section{Critical points of the Euler functional}

In this section, we study critical points of the Euler functional
$U$ and prove our main result (Theorem~\ref{shapes})
and from it derive the proof of Theorem~\ref{main}.

First, we study the case of twice differentiable Gauss
representations, for which the question is reduced to the equation
of the simple pendulum. Secondly, we show that all critical values
of the Euler functional indeed possess twice differentiable Gauss
representations. The most complicated case is the case of zero
Whitney index: the critical points are the Bernoulli's closed
$\infty$-shaped curves passed several times. We explicitly
construct deformations of these closed elasticae passed several
times that reduce the energy (which proves that these critical
curves are not local minima). In addition, we prove that there are
no simple closed curves satisfying the equation of the simple
pendulum other than circles.

\subsection{The first variation of Euler functional at a critical
elastica}\label{sec1}

Let $\gamma$ be a smooth enough regular plane curve with Gauss
representation $\alpha$.

Consider a variation $\alpha+h\beta$ with a small parameter $h$.
First, since we vary a closed curve, the variation
$\alpha+h\beta$ infinitesimally satisfies
conditions~(\ref{conditions}) above, i.e., the derivative
$\frac{d}{dh}$ of the corresponding integrals equals zero, which is
equivalent to
\begin{equation}\label{e2}
\begin{array}{c}
\int\limits_0^{2\pi}\sin(\alpha(t))\beta(t)dt=
\int\limits_0^{2\pi}\cos(\alpha(t))\beta(t)dt=0,\\
\beta(0)=\beta(2\pi).
\end{array}
\end{equation}
Secondly, for the case in which $\alpha$ is a critical elastica,
the first variation is zero. Hence
\begin{equation}\label{var1}
\int\limits_0^{2\pi}\dot \alpha\dot \beta dt=0,
\end{equation}
which is equivalent to
\begin{equation}\label{var2}
\int\limits_0^{2\pi}\alpha\ddot \beta dt=0.
\end{equation}
Now we have an equation for critical points even if the Gauss
representation is only continuous.

\vspace{1mm}

There is another equivalent representation of the equation for the
first variation
$$
\int\limits_0^{2\pi}\ddot\alpha \beta dt=0.
$$
The last equation holds for any variation $\beta$ satisfying
Equations~(\ref{e2}), therefore, we have the following statement.

\begin{corollary}\label{pendulum}
Let $\gamma$ have a twice differentiable Gauss representation
$\alpha$. Suppose that $\gamma$ is critical for the Euler
functional. Then there exist constants $C_1$ and $C_2$ such that
$\alpha$ satisfies
\begin{equation}\label{e1}
\ddot \alpha=C_1\cos \alpha+C_2\sin\alpha.
\end{equation}
\end{corollary}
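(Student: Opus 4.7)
The plan is to apply the Lagrange-multiplier principle to the constrained first-variation equation just derived. The excerpt has already established that for a critical $\alpha$ one has $\int_0^{2\pi} \ddot\alpha \cdot \beta \, dt = 0$ for every admissible variation $\beta$, meaning $\beta$ is continuous, satisfies $\beta(0)=\beta(2\pi)$, and is $L^2$-orthogonal on $[0,2\pi]$ to both $\sin\alpha$ and $\cos\alpha$ (conditions~(\ref{e2})). The target conclusion $\ddot\alpha = C_1\cos\alpha + C_2\sin\alpha$ is exactly the statement that $\ddot\alpha$ lies in the two-dimensional subspace $W:=\operatorname{span}(\cos\alpha, \sin\alpha)$, i.e.\ $(W^\perp)^\perp = W$ in $L^2$.

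To argue concretely, I would first choose $C_1, C_2$ as the unique solution of the $2\times 2$ linear system
\begin{align*}
\int_0^{2\pi} \cos\alpha\,(C_1\cos\alpha + C_2\sin\alpha)\,dt &= \int_0^{2\pi} \cos\alpha\cdot\ddot\alpha\,dt,\\
\int_0^{2\pi} \sin\alpha\,(C_1\cos\alpha + C_2\sin\alpha)\,dt &= \int_0^{2\pi} \sin\alpha\cdot\ddot\alpha\,dt,
\end{align*}
whose coefficient matrix is the Gram matrix of $\cos\alpha,\sin\alpha$. That matrix is invertible because $\cos\alpha$ and $\sin\alpha$ are linearly independent on $[0,2\pi]$: any nontrivial linear relation between them would force $\alpha$ to be constant modulo $\pi$, hence $\gamma$ to trace a segment of a straight line, which a closed regular curve cannot do. Set $\beta_0 := \ddot\alpha - C_1\cos\alpha - C_2\sin\alpha$. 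By the very choice of $C_1, C_2$, the function $\beta_0$ is orthogonal to each of $\cos\alpha$ and $\sin\alpha$, and it is periodic on $[0,2\pi]$ because $\alpha$ is a twice-differentiable Gauss representation of a closed curve (so $\sin\alpha$, $\cos\alpha$, and $\ddot\alpha$ all descend to $\mathbb S^1$). Hence $\beta_0$ is itself an admissible variation.

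Substituting $\beta = \beta_0$ into the first-variation equation, the two $C_i$-terms vanish by orthogonality and we are left with
$$
0 \;=\; \int_0^{2\pi}\ddot\alpha\cdot\beta_0\,dt \;=\; \int_0^{2\pi}\beta_0^{\,2}\,dt.
$$
Continuity of $\beta_0$ then forces $\beta_0 \equiv 0$, which is precisely~(\ref{e1}). The only technical points I expect to have to double-check are (a) the periodicity of $\ddot\alpha$ (read off the ``twice-differentiable on $\mathbb S^1$'' hypothesis) and (b) the linear independence of $\cos\alpha, \sin\alpha$, addressed above; neither looks like a real obstacle, so the argument is really just the standard finite-codimension Lagrange-multiplier trick applied to the orthogonality relation $\ddot\alpha \perp W^\perp$.
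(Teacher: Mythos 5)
Your argument is correct and is essentially the paper's own: the authors simply assert that the Corollary follows because $\int_0^{2\pi}\ddot\alpha\,\beta\,dt=0$ holds for every $\beta$ satisfying~(\ref{e2}), and your Gram-matrix/orthogonal-projection construction of $C_1,C_2$ followed by testing against $\beta_0=\ddot\alpha-C_1\cos\alpha-C_2\sin\alpha$ is exactly the standard Lagrange-multiplier step they leave implicit. You merely spell out the details (invertibility of the Gram matrix, admissibility of $\beta_0$) that the paper omits.
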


\begin{remark}\label{discussion}
If $C_1=C_2=0$ then we have the equation of circles: $\ddot
\alpha=0$. If at least one of the constants $C_1$ and $C_2$ is not
equal to zero, then after an appropriate Euclidean transformation,
we obtain the equation for the simple pendulum
$$
\ddot \alpha+\omega^2\sin\alpha=0
$$
for some nonnegative constant $\omega$.
\end{remark}

\subsection{Smoothness of critical elasticae}

Let us prove the smoothness of critical elasticae.
The proof of this assertion is traditionally missing in the
literature, it is usually supposed that the curve is smooth
enough.

\begin{proposition}\label{cont}
The Gauss representation of a critical elastica is twice
differentiable at any point.
\end{proposition}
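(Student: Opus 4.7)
The plan is to upgrade the constrained weak identity~(\ref{var1}) to an unconstrained one via Lagrange multipliers, thereby obtaining a distributional second-order ODE for $\alpha$ whose right-hand side is automatically continuous, and then to bootstrap regularity. This is the standard elliptic-regularity argument for one-dimensional variational problems, adapted to the two closure constraints in~(\ref{e2}).

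First I would note that finiteness of $U(\alpha)=\int_0^{2\pi}\dot\alpha^2\,dt$ already places $\alpha$ in $H^1$, so $\alpha$ (and hence $\sin\alpha$, $\cos\alpha$) is continuous on $[0,2\pi]$; this is the only a priori regularity I need. The substantive step is then to produce constants $C_1,C_2$ with
\[
\int_0^{2\pi}\dot\alpha\,\dot\beta\,dt \;=\; C_1\int_0^{2\pi}\cos\alpha\cdot\beta\,dt \;+\; C_2\int_0^{2\pi}\sin\alpha\cdot\beta\,dt
\]
for \emph{every} smooth $2\pi$-periodic $\beta$. To this end I would pick once and for all smooth periodic $\phi_1,\phi_2$ for which the $2\times 2$ matrix with entries $\int\cos\alpha\cdot\phi_j$, $\int\sin\alpha\cdot\phi_j$ is invertible. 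Such a choice exists because $\alpha$ cannot be constant (a constant Gauss angle violates the closure condition~(\ref{conditions})), and so the two functionals $\beta\mapsto\int\cos\alpha\cdot\beta$ and $\beta\mapsto\int\sin\alpha\cdot\beta$ are linearly independent on periodic test functions. Given an arbitrary smooth periodic $\beta$, I would form $\tilde\beta=\beta-a\phi_1-b\phi_2$ with $a,b$ chosen so that $\tilde\beta$ satisfies (\ref{e2}); substituting $\tilde\beta$ into (\ref{var1}) and rearranging yields the displayed identity, with $C_1,C_2$ depending only on $\alpha,\phi_1,\phi_2$ and not on $\beta$.

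The displayed identity is exactly the statement that, distributionally on $\mathbb S^1$, $\ddot\alpha=-C_1\cos\alpha-C_2\sin\alpha$, and the right-hand side is continuous in $t$. Hence $\dot\alpha$ coincides almost everywhere with an antiderivative of a continuous function, so $\dot\alpha$ is in fact $C^1$ and $\alpha$ is $C^2$; in particular $\alpha$ is twice differentiable at every point, and equation~(\ref{e1}) of Corollary~\ref{pendulum} is recovered pointwise with the same constants.

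The main obstacle I expect is the Lagrange-multiplier step: one must check rigorously that the two constraint functionals are linearly independent (the ``$\alpha$ is non-constant'' observation), that the scalars $a,b$ in the subtraction trick really render $\tilde\beta$ admissible for every $\beta$, and that the resulting $C_1,C_2$ are genuinely independent of $\beta$. A smaller but genuine point is to justify the integration by parts that connects (\ref{var1}) with the distributional form of (\ref{var2}) when $\dot\alpha$ is only in $L^2$; this is routine once one works with $H^1$ test functions, but should be spelled out so that the passage to $\ddot\alpha$ as a distribution is unambiguous.
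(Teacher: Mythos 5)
Your argument is correct, but it takes a genuinely different route from the paper's. The paper proves the proposition in three hands-on stages: continuity of $\alpha$ from finiteness of $U$ (exactly your first observation), then continuous differentiability, then twice differentiability, the latter two obtained by substituting into (\ref{var2}) and (\ref{var1}) an explicit family of test functions $\hat\beta_{a,\varepsilon,b,\xi}$ built from iterated integrals of differences of Dirac deltas and corrected by the orthogonal projection $h_\alpha$ onto the complement of $\mathrm{span}\{\cos\alpha,\sin\alpha,1\}$; the resulting identities relate difference quotients of $\alpha$ (resp.\ $\dot\alpha$) at two points, and the limits are controlled through the $L^2$-continuity of $h_\alpha(\hat\beta_{t,\varepsilon,u,\xi})$ in its parameters. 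Your correction $\tilde\beta=\beta-a\phi_1-b\phi_2$ is the same projection idea in multiplier form, but applying it to an arbitrary smooth periodic $\beta$ rather than to delta-based variations lets you read off the distributional Euler--Lagrange equation $\ddot\alpha=-C_1\cos\alpha-C_2\sin\alpha$ in one stroke, after which continuity of the right-hand side gives $\alpha\in\mathcal C^2$ by one-dimensional elliptic regularity. This is cleaner and slightly stronger than the paper's treatment: it yields equation (\ref{e1}) of Corollary~\ref{pendulum} unconditionally, with explicit constants, rather than only after smoothness has been established separately. The two points you flag as delicate do check out: linear dependence of the constraint functionals would force $a\cos\alpha+b\sin\alpha\equiv 0$, hence (by continuity of $\alpha$) a constant $\alpha$, contradicting (\ref{conditions}); and $C_1,C_2$ are independent of $\beta$ because $a,b$ depend linearly on $\int_0^{2\pi}\cos\alpha\cdot\beta\,dt$ and $\int_0^{2\pi}\sin\alpha\cdot\beta\,dt$ through a fixed invertible matrix. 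The only thing you gain nothing on is the paper's explicit Lipschitz-type estimate $|\dot\alpha(t)-\dot\alpha(t+\varepsilon)|<C\varepsilon$, which its construction produces along the way but which is not needed for the statement.
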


We prove the assertion of the proposition in three steps. First, we
show that the Gauss representation is continuous. Secondly, we
show that it is continuously differentiable. Finally, we prove that it is
twice differentiable.

\subsubsection{Continuity of the Gauss representation}

\begin{lemma}\label{gicont}
The Gauss representation of critical elasticas is continuous at each
point.
\end{lemma}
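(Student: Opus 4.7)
The plan is to turn the weak criticality condition $\int_0^{2\pi}\alpha\,\ddot\beta\,dt=0$ (which, as the text emphasizes just before the lemma, makes sense for merely integrable $\alpha$) into a distributional ODE for $\alpha$, and then to exploit the built-in boundedness of its right-hand side to recover continuity.

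The first step is to feed the critical equation test functions $\beta \in C_c^\infty((0,2\pi))$, which automatically satisfy the periodicity constraint $\beta(0)=\beta(2\pi)=0$ from \eqref{e2}, leaving only the two integral constraints $\int\sin\alpha\cdot\beta\,dt = \int\cos\alpha\cdot\beta\,dt = 0$. Applying the standard Lagrange multiplier principle -- a continuous linear functional that vanishes on the intersection of the kernels of two given functionals is a linear combination of them -- one obtains constants $C_1,C_2$ such that
$$\int_0^{2\pi}\alpha\,\ddot\beta\,dt \;=\; \int_0^{2\pi}\bigl(C_1\sin\alpha + C_2\cos\alpha\bigr)\beta\,dt$$
for every $\beta \in C_c^\infty((0,2\pi))$, i.e., $\ddot\alpha = C_1\sin\alpha + C_2\cos\alpha$ as distributions on $(0,2\pi)$. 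The second step is bootstrapping: no matter how singular $\alpha$ itself may be, the right-hand side is automatically a bounded measurable function (pointwise bounded by $\sqrt{C_1^2+C_2^2}$), so $\ddot\alpha \in L^\infty((0,2\pi))$ in the distributional sense. Integrating twice then produces a Lipschitz (indeed $C^1$) representative of $\alpha$ on $(0,2\pi)$, which settles continuity in the interior. Continuity across the gluing point $0\sim 2\pi$ is handled by repeating the argument on $\mathbb S^1$: the periodicity condition in \eqref{e2} allows test functions supported across that point, and the distributional identity extends without change.

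The main obstacle I anticipate is the Lagrange multiplier step, where one must verify that the two constraint functionals $\beta\mapsto\int\sin\alpha\cdot\beta$ and $\beta\mapsto\int\cos\alpha\cdot\beta$ are linearly independent on $C_c^\infty((0,2\pi))$, and handle the degenerate case (where one or both vanish identically) separately, although there the conclusion only becomes stronger. A hands-on alternative, bypassing any abstract functional analysis, is to fix two smooth bumps $\varphi_1,\varphi_2$ making the $2\times 2$ matrix $\bigl(\int\sin\alpha\cdot\varphi_i,\,\int\cos\alpha\cdot\varphi_i\bigr)_{i=1,2}$ invertible, then given an arbitrary $\beta \in C_c^\infty$ subtract an explicit combination of $\varphi_1,\varphi_2$ from $\beta$ to enforce the two integral constraints, and read off $C_1,C_2$ by applying the critical equation to the modified $\beta$.
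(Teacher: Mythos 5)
Your argument is correct, but it is a genuinely different and much heavier route than the one the paper takes for this lemma. The paper's proof does not use criticality at all beyond the finiteness of the Euler functional: if $\alpha$ were discontinuous at some point, its oscillation over arbitrarily short intervals $[t_0,t_0+\varepsilon]$ would stay above a fixed $C>0$, so $\int_{t_0}^{t_0+\varepsilon}|\dot\alpha|\,dt>C$, and Cauchy--Schwarz gives $\int_{t_0}^{t_0+\varepsilon}\dot\alpha^2\,dt>C^2/\varepsilon\to\infty$, contradicting $U(\alpha)<\infty$. In other words, continuity is just the Sobolev embedding $H^1\hookrightarrow C^0$ applied to the finite-energy hypothesis. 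Your proposal instead derives the distributional Euler--Lagrange equation $\ddot\alpha=C_1\sin\alpha+C_2\cos\alpha$ via the Lagrange multiplier lemma and bootstraps from the boundedness of the right-hand side; this is sound (the multiplier step is pure linear algebra, and $\alpha\in H^1\subset L^1_{loc}$ makes the distributional manipulations legitimate), but it proves far more than the lemma asks -- it yields a $C^{1,1}$ representative at once, and essentially anticipates what the paper spends Subsections 4.2.2--4.2.4 establishing with the explicit variations $\hat\beta_{a,\varepsilon,b,\xi}$. What each approach buys: the paper's argument is a two-line, assumption-minimal observation that isolates exactly what continuity needs (finite energy); yours packages the entire regularity theory into one standard elliptic-bootstrap step and would let you collapse Lemmas 4.4--4.8 into a single argument, at the cost of having to justify carefully that the first-variation identity and the integrations by parts are valid for merely $H^1$ Gauss representations on the circle.
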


\begin{remark}
If $\alpha$ is not differentiable, we can understand
$\dot\alpha$ in the generalized way, as the difference between the
corresponding upper and lower bounds. Similarly, we can consider
the function $\alpha$ as the $L^2$-limit of smooth functions and
calculate the derivative $\dot\alpha$ as the $L^2$-limit of the
derivatives of these functions.
\end{remark}

\begin{proof}
If the assertion of Lemma~\ref{gicont} is not true, then there
exists a constant $C$ such that for any $\varepsilon>0$ there
is a $t_0$ such that
$$
\int\limits_{t_0}^{t_0+\varepsilon} |\dot\alpha|dt>C
$$
(the integral in the left-hand side of the inequality can be infinite). Hence
$$
\int\limits_{t_0}^{t_0+\varepsilon} \dot\alpha^2
dt>\frac{C^2}{\varepsilon}.
$$
Therefore, the Euler functional is infinite for this curve. We
come to a contradiction.
\end{proof}

\subsubsection{Variations $\hat\beta_{a,\varepsilon,b,\xi}$}

Denote by $\delta(x)$ the generalized Dirac $\delta$-function.
By definition, let us put
$$
\begin{array}{l}
\displaystyle
\hat\beta_{a,\varepsilon,b,\xi}(x)=\frac{1}{b-a}\int\limits_{0}^{x}\int\limits_0^y
\frac{\delta(z-b-\xi)-\delta(z-b)}{\xi}-
\frac{\delta(z-a-\varepsilon)-\delta(z-a)}{\varepsilon}dzdy
\end{array}
$$
In addition, we extend this function as follows
$$
\begin{array}{l}
\hat\beta_{a,0,b,\xi}(x)=\frac{1}{b-a}\int\limits_{0}^{x}
\Big(\int\limits_0^y \frac{\delta(z-b-\xi)-\delta(z-b)}{\xi}dz
-\delta(y-a)\Big)dy;\\
\hat\beta_{a,0,b,0}(x)=\frac{1}{b-a}\int\limits_{0}^{x}
\big(\delta(y-b)-\delta(y-a)\big)dy;\\
\end{array}
$$

Let
$$
\beta_{a,\varepsilon,b,\xi}=\hat\beta_{a,\varepsilon,b,\xi}-h_\alpha(\hat\beta_{a,\varepsilon,b,\xi}),
$$
where $h_\alpha$ is the orthogonal projection from the space of
$L^2$-functions to the vector space spanned by the functions
$\cos(\alpha(x))$, $\sin(\alpha(x))$ and constant functions.

\begin{lemma}
The function $\beta_{a,\varepsilon,b,\xi}$ is continuous in the
$L^2$-norm at all points in which we have already defined it. \qed
\end{lemma}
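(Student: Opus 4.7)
The plan is to reduce to an explicit piecewise-linear formula for $\hat\beta_{a,\varepsilon,b,\xi}$ in $x$ and then verify $L^2$-continuity in the parameters by elementary estimates. Carrying out the double integration in the definition, and using that $\int_0^y \delta(z-c)\,dz$ is $1$ for $y>c$ and $0$ for $y<c$, while $\int_0^x$ of this step function equals $(x-c)_+ := \max(x-c,0)$, one obtains for $\varepsilon,\xi>0$
$$
(b-a)\,\hat\beta_{a,\varepsilon,b,\xi}(x)
= \frac{(x-b-\xi)_+ - (x-b)_+}{\xi} - \frac{(x-a-\varepsilon)_+ - (x-a)_+}{\varepsilon}.
$$
This is a continuous piecewise-linear function of $x$ with at most four kinks (at $a$, $a+\varepsilon$, $b$, $b+\xi$), and its right-hand side depends jointly continuously on $(a,\varepsilon,b,\xi)$ whenever $\varepsilon,\xi>0$ and $a\neq b$; $L^2$-continuity of $\hat\beta$ on this open region is then immediate.

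Next I would handle the boundary case $\varepsilon=0$ (the case $\xi=0$ is analogous, and the joint degeneracy only notationally harder). As $\varepsilon\to 0^+$, the difference quotient $\bigl((x-a-\varepsilon)_+ - (x-a)_+\bigr)/\varepsilon$ tends pointwise to $-1$ for $x>a$ and to $0$ for $x<a$, which agrees with the distributional expression $-\int_0^x \delta(y-a)\,dy$ appearing in the boundary definition of $\hat\beta_{a,0,b,\xi}$. The deviation from this limit is supported on $[a,a+\varepsilon]$ and is uniformly bounded by $1$, so its $L^2$-norm on $[0,2\pi]$ is $O(\sqrt{\varepsilon})$; this yields $L^2$-continuity at the degenerate parameter values.

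Finally, the passage from $\hat\beta$ to $\beta$ is harmless: $h_\alpha$ is the $L^2$-orthogonal projection onto the finite-dimensional subspace spanned by $\cos\alpha$, $\sin\alpha$, and the constants, which is well-defined because $\alpha$ is continuous by Lemma~\ref{gicont}. Being a bounded linear operator on $L^2$, $h_\alpha$ preserves $L^2$-continuity, so $\beta_{a,\varepsilon,b,\xi}$ inherits the continuity of $\hat\beta_{a,\varepsilon,b,\xi}$. The only mildly delicate point is the boundary step: one must track signs through the nested integrations to confirm that the $\varepsilon\to 0^+$ $L^2$-limit really coincides with the distributional expression used in the boundary formulas for $\hat\beta_{a,0,b,\xi}$ and $\hat\beta_{a,0,b,0}$. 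Once this is checked, the uniform boundedness of the integrand on the shrinking support makes the $O(\sqrt{\varepsilon})$ bound transparent.
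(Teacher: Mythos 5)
The paper offers no proof of this lemma at all --- it simply declares the proof ``straightforward'' and omits it --- so there is nothing to compare against; your write-up is effectively supplying the missing argument, and it is the right one. Reducing the nested integrals of $\delta$-functions to the explicit piecewise-linear expression
$(b-a)\,\hat\beta_{a,\varepsilon,b,\xi}(x)=\frac{(x-b-\xi)_+-(x-b)_+}{\xi}-\frac{(x-a-\varepsilon)_+-(x-a)_+}{\varepsilon}$
is exactly what makes the claim checkable: on the open region $\varepsilon,\xi>0$, $a\ne b$ this is jointly continuous in all variables and uniformly bounded, and your $O(\sqrt{\varepsilon})$ estimate (difference supported on an interval of length $\varepsilon$ and bounded by $1/|b-a|$) correctly handles the degenerate limits. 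The final reduction through $h_\alpha$ is also right: it is the orthogonal projection onto a fixed finite-dimensional subspace, hence a bounded operator on $L^2$, so it preserves $L^2$-continuity in the parameters. One remark on the point you flag but defer: as literally printed, the paper's boundary formula for $\hat\beta_{a,0,b,\xi}$ carries $-\delta(y-a)$, whereas the $\varepsilon\to 0^+$ limit of the difference quotient $-\frac{\delta(z-a-\varepsilon)-\delta(z-a)}{\varepsilon}$ is $+\delta'(z-a)$, which integrates to $+\delta(y-a)$; so with the paper's stated convention the map would actually jump at $\varepsilon=0$, and the lemma holds only after correcting this (almost certainly a typo in the paper). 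Your instinct to track the signs through the nested integrations is therefore not just a formality but catches a genuine slip in the source; with the sign fixed, your proof is complete. You might also note explicitly that the excluded point $(a,0,a,0)$ (where $\hat\beta$ is a genuine $\delta$-function, not in $L^2$) is defined only \emph{after} the lemma, which is why the statement's restriction to ``points already defined'' matters.
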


The proof is straightforward, so we omit it.

\vspace{2mm}

Finally, we put
$\hat\beta_{a,0,a,0}(x)=\delta(x-a).$

\vspace{2mm}

\subsubsection{Orthogonal basis in $H_\alpha$}

Denote by $H_\alpha$ the linear span of the functions
$\cos\alpha$, $\sin\alpha$ and constant functions. For an
arbitrary choice of $\alpha$, let us fix an orthogonal basis:
$$
\begin{array}{ccl}
e_{\alpha,1}&=&\cos\alpha;\\
e_{\alpha,2}&=&\sin\alpha+\frac{\int\limits_0^{2\pi}\sin\alpha(x)\cos\alpha(x)dx}
{\int\limits_0^{2\pi}\sin^2\alpha(x)dx}\cos\alpha;
\\
e_{\alpha,3}&=&1+
\frac{\int\limits_0^{2\pi}\sin\alpha(x)\cos\alpha(x)dx\int\limits_0^{2\pi}\cos\alpha(x)dx-
\int\limits_0^{2\pi}\cos^2\alpha(x)dx\int\limits_0^{2\pi}\sin\alpha(x)dx}
{\int\limits_0^{2\pi}\sin^2\alpha(x)dx\int\limits_0^{2\pi}\cos^2\alpha(x)dx-
\Big(\int\limits_0^{2\pi}\sin\alpha(x)\cos\alpha(x)dx\Big)^2}
\sin\alpha+
\\
&&
\frac{\int\limits_0^{2\pi}\sin\alpha(x)\cos\alpha(x)dx\int\limits_0^{2\pi}\sin\alpha(x)dx-
\int\limits_0^{2\pi}\sin^2\alpha(x)dx\int\limits_0^{2\pi}\cos\alpha(x)dx}
{\int\limits_0^{2\pi}\sin^2\alpha(x)dx\int\limits_0^{2\pi}\cos^2\alpha(x)dx-
\Big(\int\limits_0^{2\pi}\sin\alpha(x)\cos\alpha(x)dx\Big)^2}
\cos\alpha.\\
\end{array}
$$
The Gauss representation for closed curves is not constant,
therefore the denominator of the second coefficient for
$e_{\alpha,2}$ is nonzero. By the Cauchy-Schwarz inequality for
$L^2$-functions, the denominators of the coefficients of
$e_{\alpha,3}$ are also nonzero. The following statement is
straightforward.

\begin{lemma}
The functions $e_{\alpha,1}$, $e_{\alpha,2}$, and $e_{\alpha,3}$
are smooth bounded nonzero functions. \qed
\end{lemma}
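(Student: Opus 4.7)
The plan is to verify the three asserted properties—smoothness (in the sense sufficient for the $L^2$ orthogonal projection arguments to follow), boundedness, and non-vanishing as a function—for each of $e_{\alpha,1}, e_{\alpha,2}, e_{\alpha,3}$ in turn. The key structural observation is that $\alpha$ is already known to be continuous by Lemma~\ref{gicont}, so $\sin\alpha$ and $\cos\alpha$ are bounded continuous functions on $[0,2\pi]$, with values in $[-1,1]$. Adding and scaling by finite scalar constants preserves continuity and boundedness, so once we know that the scalar coefficients appearing in the defining expressions for $e_{\alpha,2}$ and $e_{\alpha,3}$ are finite, both smoothness and boundedness follow immediately. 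The only real work is therefore to verify that the denominators in those coefficients do not vanish, and that none of the resulting three functions is identically zero.

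For $e_{\alpha,2}$, the denominator is $\int_0^{2\pi}\sin^2\alpha\,dx$. If this vanished, $\sin\alpha\equiv 0$ a.e., which by continuity of $\alpha$ forces $\alpha$ to be a constant multiple of $\pi$, and then $\int_0^{2\pi}\cos\alpha\,dx=\pm 2\pi$, violating~(\ref{conditions}). For $e_{\alpha,3}$, the denominator is $\int\sin^2\alpha\cdot\int\cos^2\alpha-\bigl(\int\sin\alpha\cos\alpha\bigr)^2$, which by the Cauchy--Schwarz inequality applied to $\sin\alpha$ and $\cos\alpha$ in $L^2[0,2\pi]$ is $\geq 0$, with equality if and only if $\sin\alpha$ and $\cos\alpha$ are $L^2$-proportional. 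Such proportionality would mean $\tan\alpha$ is constant a.e., and by continuity $\alpha$ itself would be constant, again contradicting~(\ref{conditions}). This is exactly the dependence obstruction flagged in the preceding paragraph of the excerpt.

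To see that each $e_{\alpha,i}$ is itself not identically zero, I use the same template: $e_{\alpha,1}\equiv 0$ forces $\cos\alpha\equiv 0$, hence $\alpha\equiv\pi/2\pmod\pi$, hence $\alpha$ constant; $e_{\alpha,2}\equiv 0$ gives a relation $\sin\alpha=-c\cos\alpha$ for some constant $c$, equivalently $\tan\alpha$ constant, so $\alpha$ is constant; and $e_{\alpha,3}\equiv 0$ gives $A\sin\alpha+B\cos\alpha\equiv -1$, which rewrites as $R\sin(\alpha+\phi)\equiv -1$ for some $R>0, \phi\in\mathbb R$, so $\alpha+\phi$ is a constant in $\arcsin(-1/R)+2\pi\mathbb Z\cup(\pi-\arcsin(-1/R))+2\pi\mathbb Z$, and again continuity of $\alpha$ makes it constant. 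In every case, a constant $\alpha$ contradicts one of the two integrals in~(\ref{conditions}).

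The only mildly non-routine step is the Cauchy--Schwarz argument for the denominator of $e_{\alpha,3}$, together with the analysis of the equality case forcing proportionality and hence (via continuity) constancy of $\alpha$; everything else is direct verification, consistent with the authors' remark that the statement is straightforward.
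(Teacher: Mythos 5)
Your proof is correct and takes essentially the same route the paper does: the paper gives no written proof of the lemma, but in the preceding paragraph it justifies the nonvanishing of the denominators exactly as you do (non-constancy of the Gauss representation of a closed curve for $e_{\alpha,2}$, Cauchy--Schwarz with its equality case for $e_{\alpha,3}$), and the remaining verifications you supply (boundedness from $|\sin\alpha|,|\cos\alpha|\le 1$, non-vanishing of each $e_{\alpha,i}$ because a relation $A\sin\alpha+B\cos\alpha\equiv c$ forces $\alpha$ into a discrete set and hence, by continuity, to be constant, contradicting~(\ref{conditions})) are precisely the routine steps the authors omit. Your caveat that ``smooth'' should be read as ``as regular as $\alpha$ currently is'' is a fair and accurate reading of how the lemma is used.
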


As a corollary we have the following.

\begin{corollary}\label{L2Cor}
The function $h_\alpha(\hat\beta_{t,\varepsilon,u,\xi})(x)$ is a
continuous function that $L^2$-continuously depends on the
parameters $(t,\varepsilon,u,\xi)$ for an arbitrary 4-tuple of
parameters satisfying $0\le t,u \le 2\pi$ $($including
$(t,0,t,0)$$)$. \qed
\end{corollary}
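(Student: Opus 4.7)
The plan is to exploit the explicit orthogonal basis of $H_\alpha$ constructed in the preceding lemma to unfold $h_\alpha$ as a sum of three scalar coefficients multiplying fixed basis functions, and then to show those coefficients depend jointly continuously on the four parameters.

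Because $h_\alpha$ is orthogonal projection onto the three-dimensional subspace $H_\alpha = \mathrm{span}\{e_{\alpha,1}, e_{\alpha,2}, e_{\alpha,3}\}$, one has
\[
h_\alpha\big(\hat\beta_{t,\varepsilon,u,\xi}\big)
= \sum_{i=1}^{3} \frac{c_i(t,\varepsilon,u,\xi)}{\|e_{\alpha,i}\|^2_{L^2}}\,e_{\alpha,i},
\qquad
c_i(t,\varepsilon,u,\xi) := \big\langle \hat\beta_{t,\varepsilon,u,\xi},\, e_{\alpha,i}\big\rangle_{L^2}.
\]
By the preceding lemma the $e_{\alpha,i}$ are fixed bounded continuous functions of $x$ with positive $L^2$-norms independent of the four parameters, so the output is continuous in $x$ for every parameter value, and $L^2$-continuous dependence on $(t,\varepsilon,u,\xi)$ reduces to joint continuity of the three scalars $c_i$.

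Next I would compute each $c_i$ explicitly. Setting $\phi := e_{\alpha,i}$ and $\Phi(y) := \int_y^{2\pi}\phi(x)\,dx$ (so $\Phi\in C^1$ with $\Phi'=-\phi$), Fubini's theorem together with evaluation of the Dirac integrals yields, on the generic open stratum $\{\varepsilon,\xi>0,\ t\neq u\}$,
\[
c_i(t,\varepsilon,u,\xi)
= \frac{1}{u-t}\left[\frac{1}{\varepsilon}\int_t^{t+\varepsilon}\Phi(y)\,dy
- \frac{1}{\xi}\int_u^{u+\xi}\Phi(y)\,dy\right],
\]
with analogous closed-form expressions on each lower-dimensional stratum coming from the extended definitions of $\hat\beta$: a vanishing $\varepsilon$ or $\xi$ replaces an averaged integral by a point evaluation of $\Phi$, and the distinguished corner $\hat\beta_{t,0,t,0}=\delta(x-t)$ gives $c_i=\phi(t)$. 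Joint continuity on each stratum is immediate from continuity of $\Phi$, and as $\varepsilon \to 0^+$ the average $\tfrac{1}{\varepsilon}\int_t^{t+\varepsilon}\Phi(y)\,dy$ converges to $\Phi(t)$ uniformly in $t$, gluing the $\varepsilon>0$ stratum to the $\varepsilon=0$ one (and symmetrically for $\xi$).

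The principal obstacle will be the corner $(t,0,t,0)$, at which the inner ratios $1/\varepsilon,1/\xi$ and the outer prefactor $1/(u-t)$ become singular simultaneously. Here one must verify that the double limit of the generic expression reproduces the prescribed value $c_i=\phi(t)$. Indeed, collapsing the inner averages to point evaluations first yields $\big(\Phi(t)-\Phi(u)\big)/(u-t)$, and a l'H\^opital-type step then gives the limit $-\Phi'(t)=\phi(t)$ as $u\to t$, which is precisely the value demanded by the delta-function extension (up to the sign conventions fixed by the extended definitions). Once this corner consistency is checked, the joint continuity of the $c_i$ on each stratum propagates across the gluings and yields joint continuity on the whole parameter set, whence the corollary follows from the basis decomposition above.
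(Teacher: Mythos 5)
Your reduction of the corollary to the three scalar coefficients $c_i=\langle\hat\beta_{t,\varepsilon,u,\xi},e_{\alpha,i}\rangle$ is the right skeleton --- the paper offers no proof at all, treating the statement as immediate from the lemma on the basis $e_{\alpha,1},e_{\alpha,2},e_{\alpha,3}$ --- and your Fubini computation
$$
c_i=\frac{1}{u-t}\Bigl[\frac{1}{\varepsilon}\int_t^{t+\varepsilon}\Phi(y)\,dy-\frac{1}{\xi}\int_u^{u+\xi}\Phi(y)\,dy\Bigr],\qquad \Phi(y)=\int_y^{2\pi}e_{\alpha,i}(x)\,dx,
$$
is correct on the open stratum, as is the gluing to the strata $\varepsilon=0$ or $\xi=0$ while $u\ne t$ stays fixed.

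The gap is exactly at the corner $(t,0,t,0)$, which you rightly single out as the principal obstacle but then dispose of with an iterated limit (collapse the averages first, then let $u\to t$). Joint continuity requires the unrestricted limit, and that limit does not exist. Writing $\phi=e_{\alpha,i}$, continuity of $\phi$ gives $\frac{1}{\varepsilon}\int_t^{t+\varepsilon}\Phi(y)\,dy=\Phi(t)-\frac{\varepsilon}{2}\phi(t)+o(\varepsilon)$ uniformly in $t$, hence
$$
c_i=\frac{\Phi(t)-\Phi(u)}{u-t}+\frac{\xi\,\phi(u)-\varepsilon\,\phi(t)}{2(u-t)}+\frac{o(\varepsilon)+o(\xi)}{u-t}.
$$
Along the path $\xi=0$, $\varepsilon=\sqrt{|u-t|}$, $u\to t$, the middle term has magnitude $|\phi(t)|/(2\sqrt{|u-t|})$ and diverges whenever $\phi(t)\ne 0$ (e.g.\ $\cos\alpha(t)\ne0$ for $e_{\alpha,1}$), so $c_i$, and with it $h_\alpha(\hat\beta_{t,\varepsilon,u,\xi})$ in $L^2$, is unbounded near the corner when $\varepsilon$ is allowed to be large compared with $|u-t|$. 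No l'H\^opital step can repair this: the corner claim fails for unconstrained approach, so the ``corner consistency'' you propose to check is not checkable. What is true --- and all the paper subsequently uses, since the limits defining $\dot\alpha$ and $\ddot\alpha$ are taken along one-parameter families such as $(t,0,t+\varepsilon,0)$ with $\varepsilon\to0$, or with $u-t$ bounded away from zero --- is continuity along those degenerations; your argument does establish that, and would be complete if the statement were restricted to approaches with $\varepsilon,\xi=O(|u-t|)$ or to the coordinate paths actually invoked in Lemma~\ref{diff} and in the conclusion of Proposition~\ref{cont}.
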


\subsubsection {Continuous differentiability of the Gauss representation}

\begin{lemma}\label{diff}
The Gauss representation of a critical elastica is continuously
differentiable at each point.
\end{lemma}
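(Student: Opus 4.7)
The plan is to feed the admissible test-distributions $\beta_{a,\varepsilon,b,\xi}$ into the first-variation identity $\int_0^{2\pi}\alpha\ddot\beta\,dt=0$ from~(\ref{var2}) and then extract enough rigidity from the dependence on the parameters to force $\dot\alpha$ to exist pointwise and to depend continuously on its base point. The construction of $\hat\beta_{a,\varepsilon,b,\xi}$ is tailor-made so that after two differentiations it becomes a signed sum of Dirac masses supported at the four points $a$, $a+\varepsilon$, $b$, $b+\xi$; this is what allows a merely continuous $\alpha$ to be tested against $\ddot\beta$.

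The key computation is the distributional pairing
$$
\int_0^{2\pi}\alpha\,\ddot{\hat\beta}_{a,\varepsilon,b,\xi}\,dt=\frac{1}{b-a}\left[\frac{\alpha(b+\xi)-\alpha(b)}{\xi}-\frac{\alpha(a+\varepsilon)-\alpha(a)}{\varepsilon}\right],
$$
which is meaningful because $\alpha$ is continuous by Lemma~\ref{gicont}. Writing $\beta_{a,\varepsilon,b,\xi}=\hat\beta_{a,\varepsilon,b,\xi}-h_\alpha(\hat\beta_{a,\varepsilon,b,\xi})$ and invoking the vanishing of the first variation rewrites this finite difference as
$$
F(a,\varepsilon,b,\xi):=\int_0^{2\pi}\alpha(t)\,\frac{d^2}{dt^2}\bigl(h_\alpha(\hat\beta_{a,\varepsilon,b,\xi})\bigr)(t)\,dt.
$$
Because the projection lies in the span of the smooth bounded basis $e_{\alpha,1},e_{\alpha,2},e_{\alpha,3}$, this right-hand side is a bona fide Lebesgue integral, and Corollary~\ref{L2Cor} combined with the smoothness of the $e_{\alpha,i}$ shows that $F$ extends continuously to the full closed parameter range, in particular to the degenerate values $\varepsilon=0$ and $\xi=0$.

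With this in hand, I rearrange the identity to
$$
\frac{\alpha(b+\xi)-\alpha(b)}{\xi}=\frac{\alpha(a+\varepsilon)-\alpha(a)}{\varepsilon}+(b-a)\,F(a,\varepsilon,b,\xi).
$$
Fixing $(a,\varepsilon,b)$ with $a\ne b$ and letting $\xi\to 0$ from either side, continuity of $F$ forces the right-hand side to converge, so the two-sided derivative $\dot\alpha(b)$ exists. Sending then $\varepsilon\to 0$ shows by the same token that $\dot\alpha(a)$ exists as well, and produces the formula
$$
\dot\alpha(b)-\dot\alpha(a)=(b-a)\,F(a,0,b,0),
$$
whence the continuity of $F$ in $b$ transfers directly to $\dot\alpha$ and yields the desired $\mathcal{C}^1$ regularity.

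The step that will require most care is rigorously justifying the distributional pairing in the first display above under the sole hypothesis that $\alpha$ is continuous: one cannot integrate by parts on $\alpha$. The clean way is to approximate $\alpha$ in $L^2$ by a sequence of smooth functions, carry out classical integration by parts against these smooth approximants with $\hat\beta$ playing the role of test function, and pass to the limit using Lemma~\ref{gicont}. I would also restrict the parameters to the open interval $(0,2\pi)$ so that the four spike locations stay in the interior of the domain, avoiding spurious boundary contributions at $0$ and $2\pi$. Once this technical core is settled, everything else is algebraic manipulation driven by Corollary~\ref{L2Cor}.
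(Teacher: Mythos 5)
Your proposal follows essentially the same route as the paper: testing (\ref{var2}) against the variations $\beta_{a,\varepsilon,b,\xi}$, reading off the finite-difference identity, and using Corollary~\ref{L2Cor} to pass to the degenerate parameter values, first to obtain existence of $\dot\alpha$ and then its continuity. The only (harmless) deviation is in the continuity step, where the paper specializes to the $4$-tuple $(t,\xi,t+\varepsilon,t+\varepsilon+\xi)$ to get a Lipschitz-type bound $|\dot\alpha(t)-\dot\alpha(t+\varepsilon)|<C\varepsilon$, while you read continuity off the identity $\dot\alpha(b)-\dot\alpha(a)=(b-a)F(a,0,b,0)$; both versions rest on the same $L^2$-continuity of the projection $h_\alpha$, and your argument shares the paper's (acknowledged) distributional caveats about pairing the merely continuous $\alpha$ with $\ddot\beta$.
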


\begin{proof}
Let us fix $t\ne u$, and $\xi$, and vary $\varepsilon$. From
Equation~\ref{var2} it follows that
$$
\frac{\alpha(t+\varepsilon)-\alpha(t)}{\varepsilon}=
\frac{\alpha(u+\xi)-\alpha (u)}{\xi}+(t-u)
\int\limits_{0}^{2\pi}\alpha(x)(h_\alpha(\hat\beta_{t,\varepsilon,u,\xi})(x))''dx.
$$
The first summand of the left part does not depend on
$\varepsilon$. The second summand is a continuous function in the
$\varepsilon$ variable, since $\alpha$ is continuous. Therefore,
the limit of the expression in the right-hand side of the
equality exists, i.e., the derivative $\dot\alpha$ exists at $t$.

\vspace{2mm}

Let us fix $t$, and vary $\varepsilon$. From Equation~\ref{var2}
it follows that
$$
\Big|\frac{\alpha(t+\xi)-\alpha(t)}{\xi}-
\frac{\alpha(t+\varepsilon+\xi)-\alpha(t+\varepsilon)}{\xi}\Big|=
\varepsilon
\Big|\int\limits_{0}^{2\pi}\alpha(x)(h_\alpha(\hat\beta_{t,\xi,t+\varepsilon,t+\varepsilon+\xi})(x))''dx\Big|<C\varepsilon,
$$
where $C$ does not depend on $\varepsilon$ and $\xi$. Therefore,
$$
|\dot\alpha(t)-\dot\alpha(t+\varepsilon)|<C\varepsilon.
$$
Thus, the function $\dot\alpha$ is continuous at $t$.


\end{proof}

\subsubsection{Conclusion of the proof of Proposition~\ref{cont}}

From Proposition~\ref{diff} it follows that the Gauss representation
of a critical elastica is continuously differentiable at each
point.

From Equation~\ref{var1} it follows that
$$
\frac{\dot\alpha(t+\varepsilon)-\dot\alpha(t)}{\varepsilon}=
\int\limits_{0}^{2\pi}\dot\alpha(x)(h_\alpha(\hat\beta_{t,0,t+\varepsilon,0})(x))'dx.
$$
Therefore,
$$
\ddot\alpha(t)=\lim\limits_{\varepsilon\to 0}
\int\limits_{0}^{2\pi}\dot\alpha(x)(h_\alpha(\hat\beta_{t,0,t+\varepsilon,0})(x))'dx.
$$
Corollary~\ref{L2Cor} implies that the function
$h_\alpha(\hat\beta_{t,\varepsilon,u,\xi})(x)$ $L^2$-continuously
depends on $\varepsilon$. Therefore, the limit exists. Hence the
Gauss representation is twice differentiable. \qed

\subsection{Uniqueness of $\infty$-shaped normal forms passed once}

In this section we briefly analyze the critical closed elasticae
whose Gauss representations satisfy the equation
$$
\ddot\alpha+\sin\alpha=0.
$$
It is clear that the elasticae whose Gauss representations
correspond to the motions of the pendulum that makes complete
turns are not bounded. If the pendulum does not make a complete
turn, then it is possible to get a closed elastica. All such
curves are homotopic to the figure ``$\infty$'' curve. We call them
{\it $\infty$-shaped normal forms}. The main statement about
$\infty$-shaped normal forms is as follows.

\begin{proposition}\label{uniqueness}
All $\infty$-shaped normal forms are homothetic to each other.
\end{proposition}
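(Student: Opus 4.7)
The plan is to reduce the closure problem for an oscillating pendulum Gauss representation to a single transcendental equation in the amplitude, and then to show this equation has a unique root by a monotonicity argument on complete elliptic integrals.

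Starting from $\ddot\alpha + \sin\alpha = 0$ and the first integral $\dot\alpha^2 = 2(\cos\alpha - \cos A)$, where $A \in (0, \pi)$ is the oscillation amplitude, I would translate the parameter $t$ so that $\alpha(0) = 0$ and $\dot\alpha(0) > 0$. This makes $\alpha$ an odd function with period $T = 4K(k)$, where $k := \sin(A/2)$ and $K$ is the complete elliptic integral of the first kind. Closure of the curve over one period requires $\int_0^T\sin\alpha\,dt = \int_0^T\cos\alpha\,dt = 0$. The oddness of $\alpha$ forces the first integral to vanish automatically, so only the second is a genuine constraint on $A$.

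Changing variable via $dt = d\alpha/\sqrt{2(\cos\alpha - \cos A)}$ over a quarter period and then applying the Jacobi substitution $\sin(\alpha/2) = k\sin\phi$, the condition $\int_0^T\cos\alpha\,dt = 0$ collapses, after routine simplification using $\cos\alpha = 1 - 2k^2\sin^2\phi$, to the single equation
$$2E(k) - K(k) = 0,$$
where $E$ is the complete elliptic integral of the second kind. Since $E$ strictly decreases from $\pi/2$ to $1$ and $K$ strictly increases from $\pi/2$ to $+\infty$ on $(0,1)$, the left-hand side is strictly decreasing from $\pi$ to $-\infty$, and hence has a unique zero $k^* \in (0, 1)$. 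This pins down the amplitude $A^* = 2\arcsin(k^*)$, and therefore the Gauss representation $\alpha$ up to translation of the parameter and reflection, both of which correspond to Euclidean motions of the curve. Any remaining freedom in the overall arclength scale, equivalently allowing a general $\omega$ in $\ddot\alpha + \omega^2\sin\alpha = 0$ as in Remark~\ref{discussion}, realizes precisely the homothety factor, so any two $\infty$-shaped critical curves are homothetic, as claimed.

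The main obstacle will be the elliptic-integral manipulation collapsing the closure condition into $2E(k) = K(k)$: the substitution itself is standard, but one must carefully exploit the four-fold symmetry of the quarter-period and track the sign of $\dot\alpha$ on each quadrant. Once this step is in hand, the monotonicity of $2E - K$ and the uniqueness conclusion follow from elementary properties of complete elliptic integrals.
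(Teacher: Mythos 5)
Your proof is correct in substance, but it takes a genuinely different route from the paper. You reduce closure of the curve to the single transcendental equation $2E(k)-K(k)=0$ (the sine-integral vanishes by oddness of $\alpha$, and the cosine-integral collapses to $2E-K$ under the Jacobi substitution, exactly as you indicate) and then get uniqueness of the amplitude from strict monotonicity of $2E-K$; the only slip is the value at $k=0$, which is $2E(0)-K(0)=\pi/2$ rather than $\pi$ -- harmless, since only positivity there and the limit $-\infty$ as $k\to 1$ matter. The paper instead argues geometrically by contradiction: given two closed solutions with different amplitudes, it compares quarter-arcs starting at the vertical-tangent point, uses a ``2-germ similarity'' lemma (two convex arcs inscribed in the same angle must share a parameter pair with equal tangent direction and equal curvature) together with the first integral $\dot\alpha=\pm\sqrt{\cos\alpha+C}$ to rule out crossings, and then derives contradictory length inequalities from convexity on one side and from the monotonicity of the pendulum period in the amplitude on the other. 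Your analytic route is shorter and buys more: it simultaneously proves \emph{existence} of the closed $\infty$-shaped solution and pins down the amplitude explicitly ($k^*\approx 0.909$), whereas the paper's argument only establishes uniqueness and leaves existence implicit. The paper's route buys a self-contained geometric comparison technique (Lemmas~\ref{zzz}--\ref{zzzzz} and Proposition~\ref{curv}) that avoids manipulating the closure integral, at the cost of being considerably longer. One point you should still make explicit to match the scope of the statement: a closed curve whose Gauss representation is an oscillating pendulum solution has parameter length equal to an integer multiple $nT$ of the period, and since $\int_0^{nT}e^{i\alpha}\,dt=n\int_0^{T}e^{i\alpha}\,dt$, closure over $n$ periods already forces the one-period closure condition you analyze.
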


In the proof of Proposition~\ref{uniqueness}, we essentially used
the following general statement.

\begin{proposition}\label{curv}{\bf (On 2-germ similarity.)}
Consider two $\mathcal C^2$-curves $\gamma_1:[0,T_1]\to\r^2$ and
$\gamma_2:[0,T_2]\to\r^2$ with Gauss representations $\alpha_1$
and $\alpha_2$ respectively.

Suppose that the following conditions hold:

$\bullet$ the curves are convex;

$\bullet$ the curves are inscribed in the same angle centered at $O$;

$\bullet$ the curves have the same starting point:
$\gamma_1(0)=\gamma_2(0)$.

Then there exists a pair of points $(t_1,t_2)$ simultaneously satisfying
$$
\alpha_1(t_1)=\alpha_2(t_2) \quad \hbox{and} \quad \dot
\alpha_1(t_1)=\dot\alpha_2(t_2).
$$
\end{proposition}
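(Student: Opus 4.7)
The plan is to reparametrize both curves by their common tangent angle and reduce the statement to finding a single angle at which the two curvatures coincide.

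Since both curves are convex, $\dot\alpha_i$ has constant sign, so each $\alpha_i$ is strictly monotone on $[0,T_i]$. The hypotheses of a common starting point on $L_1$ and inscription in the same angle force both $\alpha_i$ to share the starting tangent angle $\theta_s$ (along $L_1$ at $P_0$) and the ending tangent angle $\theta_e$ (along $L_2$ at the endpoint). Inverting $\alpha_i$, each curve becomes a map $\theta\mapsto\gamma_i(\theta)$ on the common interval $[\theta_s,\theta_e]$ with
\[
\frac{d\gamma_i}{d\theta}=\frac{(\cos\theta,\sin\theta)}{\kappa_i(\theta)},\qquad \kappa_i(\theta):=\dot\alpha_i(\alpha_i^{-1}(\theta)).
\]
In this parametrization, the conclusion becomes equivalent to finding $\theta^{*}\in(\theta_s,\theta_e)$ with $\kappa_1(\theta^{*})=\kappa_2(\theta^{*})$, because then $t_i:=\alpha_i^{-1}(\theta^{*})$ satisfies both required equalities simultaneously.

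The next step is to extract an integral identity from the endpoint conditions. Let $u_1,u_2$ be unit vectors along $L_1,L_2$ emanating from $O$, write $\eta_j$ for the angle of $u_j$, and set $a=|P_0-O|$. Integrating the ODE yields
\[
\gamma_i(\theta_e)-\gamma_i(\theta_s)=\int_{\theta_s}^{\theta_e}\frac{(\cos\theta,\sin\theta)}{\kappa_i(\theta)}\,d\theta,
\]
while the conditions $\gamma_i(\theta_s)=P_0$ and $\gamma_i(\theta_e)\in L_2$ force the left side to equal $b_iu_2-au_1$ for some scalar $b_i$. Projecting onto the unit vector $u_2^\perp$ perpendicular to $u_2$ annihilates the $b_iu_2$-term, leaving an equation whose right side is independent of $i$. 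Subtracting the two resulting identities gives
\[
\int_{\theta_s}^{\theta_e}\sin(\theta-\eta_2)\Bigl(\frac{1}{\kappa_1(\theta)}-\frac{1}{\kappa_2(\theta)}\Bigr)d\theta=0.
\]

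Finally, I would apply the intermediate value theorem. The ``short-arc'' configuration implicitly imposed by the inscription hypothesis (the convex arc lying on the side of the wedge facing $O$) forces $|\theta_e-\theta_s|<\pi$ and $\eta_2=\theta_e$, so that $\sin(\theta-\eta_2)$ has a strict constant sign on the open interval $(\theta_s,\theta_e)$. Combined with the continuity of $1/\kappa_1-1/\kappa_2$, the vanishing of its integral against this sign-definite weight forces the existence of the required $\theta^{*}$.

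The main obstacle I foresee is a precise formulation of the geometric hypothesis: that ``inscribed in the same angle'' really does force the short-arc configuration, in which the tangent rotates by less than $\pi$. If the arc instead wraps by more than $\pi$ in tangent angle, the weight $\sin(\theta-\eta_2)$ changes sign and the simple intermediate value argument fails, and one would need either a stronger hypothesis or a different auxiliary identity (for example, involving a second projection). A minor secondary concern is the possible vanishing of $\kappa_i$ at the endpoints of the $\theta$-interval, which can be handled by restricting the argument to a compact subinterval strictly inside $(\theta_s,\theta_e)$ and then passing to a limit.
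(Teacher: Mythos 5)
Your proof is correct, but it takes a genuinely different route from the paper's. The paper argues by contradiction through two geometric lemmas: Lemma~\ref{zzz} shows that a \emph{uniform} strict comparison $\dot\alpha_1(t_1)>\dot\alpha_2(t_2)$ at all pairs with equal tangent angle forces the endpoint of $\gamma_1$ strictly between $O$ and the endpoint of $\gamma_2$ (this is exactly your change of variables $dt=d\theta/\kappa_i(\theta)$, applied to the $x$-coordinate instead of to the projection on $u_2^{\perp}$); Lemma~\ref{zzzzz} then shows such a uniform comparison is impossible, because the comparison at $t=0$ forces the two curves to cross, and a common tangent line at the crossing cuts out a smaller inscribed angle in which Lemma~\ref{zzz} applies with the roles reversed, giving a contradiction. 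The proposition then follows by the intermediate value theorem applied to the sign of $\dot\alpha_1-\dot\alpha_2$. You instead compress all of this into a single weighted integral identity, $\int_{\theta_s}^{\theta_e}\sin(\theta-\eta_2)\bigl(1/\kappa_1(\theta)-1/\kappa_2(\theta)\bigr)\,d\theta=0$, obtained by projecting the endpoint constraint onto $u_2^{\perp}$, and apply the intermediate value theorem directly to $1/\kappa_1-1/\kappa_2$ against the sign-definite weight. This is shorter and eliminates the auxiliary common-tangent construction entirely. The ``main obstacle'' you flag is not actually one: for a convex arc inscribed in a proper angle of opening $\pi-2\alpha_0$, the tangent turns monotonically by exactly $2\alpha_0<\pi$ (the paper uses this implicitly when it asserts that both $\alpha_i$ increase from $-\alpha_0$ to $\alpha_0$), so $\sin(\theta-\eta_2)$ is indeed sign-definite on the open interval; and whether $\theta_e$ equals $\eta_2$ or $\eta_2+\pi$ is immaterial since either choice keeps the weight of constant sign. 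Your second caveat, the possible vanishing of $\kappa_i$ at the ends of the arc (which degenerates $\alpha_i^{-1}$ and $1/\kappa_i$), is a genuine technicality, but it is equally present and equally unaddressed in the paper's Lemma~\ref{zzz}, which performs the same reparametrization.
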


We start with the following two lemmas.

\begin{lemma}\label{zzz}
Consider two $\mathcal C^2$-curves $\gamma_1:[0,T_1]\to\r^2$ and
$\gamma_2:[0,T_2]\to\r^2$ with Gauss representations $\alpha_1$
and $\alpha_2$ respectively.

Suppose that the following conditions hold:

$\bullet$ the curves are convex;

$\bullet$ the curves are inscribed in the same angle centered at $O$;

$\bullet$ the curves have the same starting point:
$\gamma_1(0)=\gamma_2(0)$.

$\bullet$ for any pair of points $(t_1,t_2)$ satisfying
$\alpha_1(t_1)=\alpha_2(t_2)$ the inequality
$\dot\alpha_1(t_1)>\dot\alpha_2(t_2)$ holds.

Then the point $\gamma_1(T_1)$ is contained in the interior of the
segment with endpoints $O$ and $\gamma_2(T_2)$ $($see
Figure~$1$,
left$)$.
\end{lemma}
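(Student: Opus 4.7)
The plan is to reparameterize both curves by their tangent angle and then run a variational comparison argument. By convexity, each Gauss representation $\alpha_i$ is monotonic, so inverting gives $\sigma_i(\alpha):=\gamma_i(\alpha_i^{-1}(\alpha))$ with
\[
\frac{d\sigma_i}{d\alpha}=\frac{(\cos\alpha,\sin\alpha)}{\kappa_i(\alpha)},
\]
and the hypothesis becomes $\kappa_1(\alpha)>\kappa_2(\alpha)$ on the common range of tangent angles. From the inscribed-in-angle configuration together with the shared starting point $P=\gamma_i(0)\in r_1$, the two curves must share the same initial tangent direction at $P$; call this common value $\alpha_0$, so that
\[
\sigma_i(\alpha)=P+\int_{\alpha_0}^\alpha\frac{(\cos u,\sin u)}{\kappa_i(u)}\,du.
\]

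Place $O$ at the origin and let $\hat\ell=(\cos\theta,\sin\theta)$ be the unit direction of the second ray $r_2$. Let $d(x,y):=x\sin\theta-y\cos\theta$ be the signed distance from $r_2$, positive on the $r_1$-side of the angular region. The exit tangent angle $\alpha_i^*$ is defined by $d(\sigma_i(\alpha_i^*))=0$, and the exit coordinate along $r_2$ is $s_i:=\sigma_i(\alpha_i^*)\cdot\hat\ell$, so that $Q_i:=\gamma_i(T_i)=s_i\hat\ell$. It therefore suffices to show $0<s_1<s_2$.

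To compare $s_1$ and $s_2$, I would interpolate the curvature. Set $\kappa_t:=(1-t)\kappa_2+t\kappa_1$ and let $\alpha^*(t)$, $s(t)$ denote the exit data for the curve with curvature $\kappa_t$. Implicit differentiation of the exit condition, followed by substitution into the formula for $s(t)$, gives after simplification
\[
\frac{ds}{dt}=-\frac{1}{\sin(\alpha^*(t)-\theta)}\int_{\alpha_0}^{\alpha^*(t)}\sin(\alpha^*(t)-u)\,\frac{\kappa_1(u)-\kappa_2(u)}{\kappa_t(u)^2}\,du.
\]
Convexity of the interpolated arc inside a proper angle keeps $\alpha^*(t)-\alpha_0<\pi$ and $\alpha^*(t)-\theta\in(0,\pi)$, so both $\sin(\alpha^*(t)-u)$ (for $u\in[\alpha_0,\alpha^*(t))$) and $\sin(\alpha^*(t)-\theta)$ are positive; combined with $\kappa_1-\kappa_2>0$ this yields $ds/dt<0$. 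Integrating over $t\in[0,1]$ gives $s_1<s_2$, placing $Q_1$ strictly between $O$ and $Q_2$ on $r_2$.

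The main obstacle will be rigorously justifying the two sign conditions along the whole interpolation: that the tangent angle range of each interpolated curve stays below $\pi$, and that its exit tangent angle lies in the interval $(\theta,\theta+\pi)$. Both are geometric facts about how a convex arc can be inscribed in a proper angle with a prescribed starting point, and they have to be verified uniformly in $t$ so that the sign analysis in the formula for $ds/dt$ applies throughout.
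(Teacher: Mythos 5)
Your opening move---reparameterizing by the tangent angle so that the hypothesis becomes the pointwise curvature comparison $\kappa_1(\alpha)>\kappa_2(\alpha)$---is exactly right and is also how the paper begins. But the interpolation argument that follows breaks down for a concrete geometric reason: ``inscribed in the angle'' means the curves are \emph{tangent} to the second ray $r_2$ at their endpoints, so the exit tangent angle satisfies $\alpha_i^*=\theta$ for both actual curves. Your formula for $ds/dt$ divides by $\sin(\alpha^*(t)-\theta)$, which therefore vanishes at $t=0$ and $t=1$; the transversality needed to define $\alpha^*(t)$ by implicit differentiation fails precisely at the two curves you are trying to compare. Worse, for intermediate $t$ the interpolated curve need not meet the line containing $r_2$ at all: its signed distance to that line at the moment the tangent angle reaches $\theta$ equals $d(P)-\int_{\alpha_0}^{\theta}\sin(\theta-u)\,\kappa_t(u)^{-1}\,du$, which is strictly increasing in $t$ (since $\sin(\theta-u)>0$ and $\kappa_t^{-1}$ is pointwise decreasing) and vanishes at $t=0$, hence is strictly positive for $t>0$. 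So $s(t)$ is undefined along the very path you integrate over. (Incidentally, since this quantity must also vanish at $t=1$, the strict monotonicity already shows the hypotheses are contradictory---that is the content of the paper's next lemma---but that does not rescue a proof which manipulates nonexistent exit points.) Finally, the condition you flag as ``the main obstacle,'' namely $\alpha^*(t)-\theta\in(0,\pi)$, is not merely hard to verify: it is false wherever the exit is a tangency.

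The repair is to drop both the interpolation and the projection onto the exit ray. Write $\gamma_i(T_i)=P+\int_{\alpha_0}^{\theta}(\cos u,\sin u)\,\kappa_i(u)^{-1}\,du$ and project onto the \emph{bisector} direction of the angle at $O$. Every tangent direction $u\in[\alpha_0,\theta]$ makes an angle of at most $(\theta-\alpha_0)/2<\pi/2$ with the bisector, so the projected integrand is strictly positive throughout, and $\kappa_1>\kappa_2$ immediately gives that $\gamma_1(T_1)$ has strictly smaller bisector coordinate than $\gamma_2(T_2)$. Since both endpoints lie on $r_2$, along which the bisector coordinate grows with the distance from $O$, this places $\gamma_1(T_1)$ in the interior of the segment $[O,\gamma_2(T_2)]$. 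This one-line pointwise comparison is the paper's proof (stated there, in bisector-adapted coordinates, as $\int_0^{T_1}\cos\alpha_1\,dt<\int_0^{T_2}\cos\alpha_2\,dt$); your choice of projecting onto $\hat\ell$ would in any case produce the integrand $\cos(u-\theta)$, which changes sign once the angle opening exceeds $\pi/2$, so even the simplified version of your computation needs the bisector, not the exit ray.
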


\begin{figure}
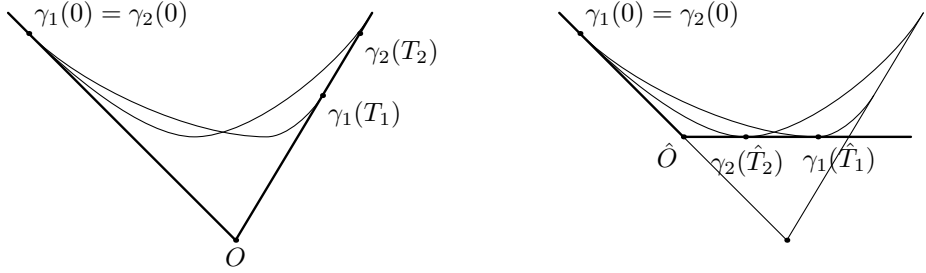

$$
\epsfbox{picture.1} \qquad \qquad\epsfbox{picture.2}
$$
\caption{Two curves inscribed at the same angle (left) and their
common tangent line (right).}\label{pictureA}
\end{figure}

\begin{proof}
Without loss of generality, we assume that the angle in which both
curves are inscribed is defined by the rays $y=\tan\alpha_0, x>0$
and $y=-\tan\alpha_0, x<0$. Let the starting point be on the left
ray. So both $\alpha_1$ and $\alpha_2$ are increasing functions
from $-\alpha_0$ to $\alpha_0$.

The condition that {\it for any pair of points $(t_1,t_2)$ satisfying
$\alpha_1(t_1)=\alpha_2(t_2)$ the inequality
$\dot\alpha_1(t_1)>\dot\alpha_2(t_2)$ holds} implies that
$$
\int\limits_{0}^{T_1}\cos\alpha_1(t)dt<\int\limits_{0}^{T_2}\cos\alpha_2(t)dt.
$$
Hence the $x$-coordinate of $\alpha_1(T_1)$ is smaller than the
$x$-coordinate of $\alpha_2(T_2)$. Therefore, the point
$\gamma_1(T_1)$ is contained in the interior of the segment with
endpoints $O$ and $\gamma_2(T_2)$.
\end{proof}

\begin{lemma}\label{zzzzz}
There are no $C^2$-curves satisfying all the conditions of
Lemma~\ref{zzz}.
\end{lemma}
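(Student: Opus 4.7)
The plan is to prove Lemma~\ref{zzzzz} by contradiction, showing that the four conditions of Lemma~\ref{zzz} are jointly inconsistent. The crux is that the condition that both endpoints $\gamma_1(T_1)$ and $\gamma_2(T_2)$ lie on the same ray $R_2$ through $O$, when projected onto the direction perpendicular to $R_2$, produces a scalar integral constraint on the Gauss representations that is incompatible with the strict pointwise inequality $\dot\alpha_1>\dot\alpha_2$.

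Concretely, let $v=(\cos\alpha_0,\sin\alpha_0)$ be the unit vector along $R_2$ and $v^{\perp}=(-\sin\alpha_0,\cos\alpha_0)$ its counter-clockwise rotate. The condition $\gamma_i(T_i)\in R_2$ is equivalent to $v^{\perp}\!\cdot\gamma_i(T_i)=v^{\perp}\!\cdot O$, which rearranges to
$$
v^{\perp}\!\cdot\!\big(\gamma_i(T_i)-P\big) \;=\; v^{\perp}\!\cdot\!(O-P),
$$
a quantity independent of $i$; call it $c$. Writing $\dot\gamma_i(s)=(\cos\alpha_i(s),\sin\alpha_i(s))$, the left side equals $\int_0^{T_i}\sin(\alpha_i(s)-\alpha_0)\,ds$, and changing variables to $\beta=\alpha_i(s)$ (permissible since convexity forces $\alpha_i$ to be strictly monotone from $-\alpha_0$ to $\alpha_0$) transforms the constraint into
$$
\int_{-\alpha_0}^{\alpha_0}\frac{\sin(\beta-\alpha_0)}{\dot\alpha_i(\alpha_i^{-1}(\beta))}\,d\beta \;=\; c,\qquad i=1,2.
$$

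On the open interval $(-\alpha_0,\alpha_0)$ the factor $\sin(\beta-\alpha_0)$ is strictly negative (for the relevant range $\alpha_0<\pi/2$), while the last hypothesis of Lemma~\ref{zzz} reads $1/\dot\alpha_1(\alpha_1^{-1}(\beta)) < 1/\dot\alpha_2(\alpha_2^{-1}(\beta))$ strictly for all such $\beta$. Multiplying this strict inequality of positive quantities by the negative factor $\sin(\beta-\alpha_0)$ reverses its direction, and integrating preserves strictness, so
$$
\int_{-\alpha_0}^{\alpha_0}\frac{\sin(\beta-\alpha_0)}{\dot\alpha_1(\alpha_1^{-1}(\beta))}\,d\beta \;>\; \int_{-\alpha_0}^{\alpha_0}\frac{\sin(\beta-\alpha_0)}{\dot\alpha_2(\alpha_2^{-1}(\beta))}\,d\beta,
$$
contradicting the common value $c$ of the two sides.

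I expect the main obstacle to be isolating the right scalar reduction of the two-dimensional endpoint condition: projection onto $v^{\perp}$ is essentially the unique linear combination of the $x$- and $y$-coordinate equations whose integrand becomes sign-definite across the traversed range of $\alpha_i$, and that sign-definiteness is precisely what allows the strict pointwise inequality of the hypothesis to upgrade to a strict inequality between the two integrals. Once this reduction is in place, the argument is a one-line integration inequality, entirely parallel in spirit to the $x$-coordinate comparison used in the proof of Lemma~\ref{zzz}.
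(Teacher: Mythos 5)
Your argument is correct, but it is genuinely different from the paper's. The paper argues geometrically: assuming such a pair of curves exists, it combines the conclusion of Lemma~\ref{zzz} (that $\gamma_1(T_1)$ lies strictly between $O$ and $\gamma_2(T_2)$) with the initial inequality $\dot\alpha_1(0)>\dot\alpha_2(0)$ to show the two arcs must cross, then produces a common tangent line meeting the left ray at a new vertex $\hat O$, and applies Lemma~\ref{zzz} a second time to the initial sub-arcs inscribed in the angle at $\hat O$, where the roles of the two curves are reversed --- a contradiction. You bypass all of this and extract the contradiction from a single scalar identity: the projection onto $v^{\perp}$, the normal to the terminal ray, is the unique (up to scale) linear functional of the displacement $\gamma_i(T_i)-P$ that is simultaneously pinned to a common value $c$ by the endpoint constraint and has a sign-definite integrand $\sin(\beta-\alpha_0)$ over the traversed range of tangent angles; the strict pointwise hypothesis then yields $c>c$. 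Your route is shorter and more self-contained --- it avoids the existence of a common tangent to two crossing convex arcs, which the paper treats rather tersely --- and it exhibits Lemmas~\ref{zzz} and~\ref{zzzzz} as the two coordinate projections of one and the same comparison of $ds=d\beta/\dot\alpha_i$; the paper's route, in exchange, reuses the tangent-line/inscribed-angle device that reappears in the proof of Proposition~\ref{uniqueness}. Two points worth making explicit in a write-up: the sign-definiteness of $\sin(\beta-\alpha_0)$ on $(-\alpha_0,\alpha_0)$ needs $\alpha_0\le\pi/2$, which holds because the total turning $2\alpha_0$ of a convex arc inscribed in a genuine angle is less than $\pi$; and the change of variables $\beta=\alpha_i(s)$ uses the strict monotonicity of $\alpha_i$ coming from convexity --- exactly the same implicit step as in the paper's own proof of Lemma~\ref{zzz}, so you assume nothing the paper does not.
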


\begin{proof}
We prove this by reductio ad absurdum. Suppose that such curves
$\gamma_1$ and $\gamma_2$ exist.

On the one hand, by Lemma~\ref{zzz} the point $\gamma_1(T_1)$ is
contained in the interior of the segment with endpoints $O$ and
$\gamma_2(T_2)$. On the other hand, the condition
$\dot\alpha_1(0)>\dot\alpha_2(0)$ implies that there are some
points $\gamma_2(t)$ (with small parameter $t$) that lie in the
same connected component in the complement of the angle to the
curve $\gamma_1$. Hence, there is a point where $\gamma_1$ crosses
$\gamma_2$.

Hence there exists a line $l$ which is tangent to both curves
$\gamma_1$ and $\gamma_2$. Suppose that this happens at times
$\hat T_1$ and $\hat T_2$ respectively. Let $l$ cross the left ray $r$
at the point $\hat O$ (see Figure~$1$,
right). Notice that in these settings, the point $\gamma_2(\hat T_2)$ is
contained in the interior of the segment with endpoints $\hat O$
and $\gamma_1(\hat T_1)$. We come to a contradiction with
Lemma~\ref{zzz} for the curves $\gamma_1:[0,\hat T_1]\to\r^2$ and
$\gamma_2:[0,\hat T_2]\to\r^2$, which are both inscribed in the
angle with vertex $\hat O$.
\end{proof}

{\noindent {\it Proof of Proposition~\ref{curv}.}  From
Lemma~\ref{zzzzz}, it follows that if all the conditions are
satisfied, then there are two pairs $(t_1,t_2)$ and $(t_3,t_4)$
such that
$$
\left\{
\begin{array}{l}
\alpha_1(t_1)=\alpha_2(t_2)\\
\dot\alpha_1(t_1)>\dot\alpha_2(t_2)
\end{array}
\right.
\qquad \hbox{and} \qquad
\left\{
\begin{array}{l}
\alpha_1(t_3)=\alpha_2(t_4)\\
\dot\alpha_1(t_3)<\dot\alpha_2(t_4)
\end{array}
\right..
$$
Hence, for continuity reasons, there exists a pair $(t_5,t_6)$ for which
we have both
$$
\alpha_1(t_1)=\alpha_2(t_2)\qquad \hbox{and} \qquad
\dot\alpha_1(t_5)=\dot\alpha_2(t_6).
$$
\qed}

\begin{lemma}\label{time}
The duration of the period increases with the amplitude.
\end{lemma}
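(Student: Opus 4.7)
\smallskip
\noindent\textbf{Proof proposal.}
The plan is to reduce the statement to the strict monotonicity of the complete elliptic integral of the first kind $K(k)$ with respect to its modulus $k\in(0,1)$.

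First, I would fix the normalization $\omega=1$ (since $\omega$ can be absorbed into a rescaling of time) and work with the pendulum equation $\ddot\alpha+\sin\alpha=0$. Multiplying by $\dot\alpha$ and integrating gives the first integral $\tfrac{1}{2}\dot\alpha^{2}-\cos\alpha=E$. For an oscillation (non-circulating motion) of amplitude $A\in(0,\pi)$, the turning points $\dot\alpha=0$ occur at $\alpha=\pm A$, so $E=-\cos A$ and
$$
\dot\alpha^{2}=2(\cos\alpha-\cos A)=4\bigl(\sin^{2}(A/2)-\sin^{2}(\alpha/2)\bigr).
$$

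Next I would express the period. By the time-reversal and half-swing symmetries of the equation, the full period equals four times the time needed to travel from $\alpha=0$ to $\alpha=A$, so
$$
T(A)=4\int_{0}^{A}\frac{d\alpha}{2\sqrt{\sin^{2}(A/2)-\sin^{2}(\alpha/2)}}.
$$
The substitution $\sin(\alpha/2)=\sin(A/2)\sin\varphi$, with $k:=\sin(A/2)\in(0,1)$, turns this integral into the standard form
$$
T(A)=4\int_{0}^{\pi/2}\frac{d\varphi}{\sqrt{1-k^{2}\sin^{2}\varphi}}=4\,K(k).
$$

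Finally, I would observe that for each fixed $\varphi\in(0,\pi/2)$ the integrand $(1-k^{2}\sin^{2}\varphi)^{-1/2}$ is strictly increasing in $k\in(0,1)$, so $K(k)$ is strictly increasing in $k$ by differentiation under the integral sign (the derivative $k\sin^{2}\varphi\,(1-k^{2}\sin^{2}\varphi)^{-3/2}$ is positive on a set of positive measure). Since $k=\sin(A/2)$ is strictly increasing in $A$ on $(0,\pi)$, the function $A\mapsto T(A)=4K(\sin(A/2))$ is strictly increasing, which is the assertion of the lemma.

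The computation is completely routine; there is no real obstacle. The only point that deserves a line of justification in the final write-up is the passage from the differential equation to the period integral (i.e., why the amplitude $A$ is attained in finite time and why the motion is symmetric in the two senses used above), both of which follow immediately from the energy integral and the fact that $\cos\alpha-\cos A$ vanishes to first order at $\alpha=\pm A$ for $A\in(0,\pi)$.
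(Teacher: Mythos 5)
Your proposal is correct and follows essentially the same route as the paper: both reduce the claim to the formula $T=4K$ evaluated at (a power of) $\sin(\alpha_0/2)$ and conclude by observing that the integrand of the complete elliptic integral is pointwise increasing in the amplitude. You merely supply the derivation of the period formula from the energy integral, which the paper takes as known.
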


\begin{proof}
This follows directly from the fact that the period $T$ with
amplitude $\alpha_0$ is calculated by the formula:
$$
T=4\omega K\Big(\sin^2\frac{\alpha_0}{2}\Big)
=4\omega\int\limits_{0}^{\pi/2}\frac{d\theta}{\sqrt{1-\sin^2\frac{\alpha_0}{2}\sin^2\theta}}
$$
(here $K(t)$ is the complete elliptic integral of the first type).
The function under the integral sign increases when $\alpha_0$
increases.
\end{proof}

{\noindent {\it Proof of Proposition~\ref{uniqueness}.} We argue
by reduction ad absurdum. Suppose that there are two different
closed $\infty$-shaped solutions of the equation
$$
\ddot \alpha +\sin\alpha=0
$$
with amplitudes $\alpha_1$ and $\alpha_2$. Let
$\alpha_2>\alpha_1$. Consider the two curves $\gamma_1$ and $\gamma_2$
corresponding to one fourth of the $\infty$-shaped curves,
starting from the point with vertical tangent vector (see
Figure 2).

\begin{figure}
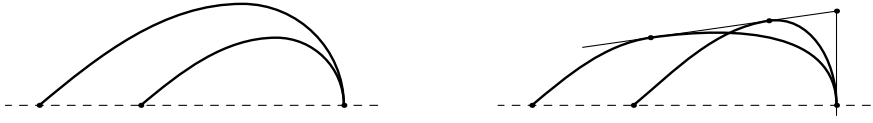

$$
\epsfbox{picture.3} \qquad \qquad\epsfbox{picture.4}
$$
\caption{The curves $\gamma_1$ and $\gamma_2$ either do not
intersect (on the left) or intersect (on the right). In the last
case some of their parts are inscribed in the same
angle.}\label{pictureB}
\end{figure}

Let us prove that $\gamma_1$ and $\gamma_2$ do not intersect at
interior points. Suppose that the converse is true. If the
curves $\gamma_1$ and $\gamma_2$ intersect, then there is some
line which is tangent to both of these curves. Therefore, there
are some parts of them that are inscribed in the same angle (see
Figure~$2$,
 right). On the one hand, by
Proposition~\ref{curv} there exists a pair of points $(t_1,t_2)$
satisfying simultaneously
$$
\alpha_1(t_1)=\alpha_2(t_2) \quad \hbox{and} \quad \dot
\alpha_1=\dot \alpha_1(t_1)=\dot\alpha_2(t_2).
$$
On the other hand, any solution of the pendulum equation ($\ddot
\alpha +\sin\alpha=0$) satisfies
$$
\dot \alpha=\pm\sqrt{\cos\alpha+C}.
$$
Combining these, we obtain the following:
$$
\sqrt{\cos\alpha_2(t_1)+C_1}=\sqrt{\cos\alpha_1(t_1)+C_1}=\dot
\alpha_1(t_1)=\dot\alpha_2(t_2)=\sqrt{\cos\alpha_2(t_1)+C_2},
$$
which implies $C_1=C_2$. Hence the curves $\gamma_1$ and
$\gamma_2$ coincide. We come to a contradiction. Therefore, the
curves $\gamma_1$ and $\gamma_2$ do not intersect at inner
points.

\vspace{2mm}

Since the curves do not intersect and
$\dot \alpha_1(0)<\dot \alpha_2(0),$
the curve $\gamma_1$ lies above the curve $\gamma_2$. On the one
hand, since $\gamma_1$ is convex, and both ends of both curves are
on the $OX$ coordinate axis, the length of $\gamma_1$ is not less than
the length of $\gamma_2$. On the other hand, Lemma~\ref{time}
implies that the length of $\gamma_1$ is less than the length of
$\gamma_2$. We come to a contradiction.
\qed

\begin{remark}\label{BL}
The unique $\infty$-shaped curve is called the {\it Bernoulli
closed elastica}. Its arc length representation is closely related
to Bernoulli's lemniscate. In~\cite{Lev} and~\cite{Srid} there are
good historical overviews about elasticae in general, which contain, in
particular, good descriptions of this curve.
\end{remark}

\subsection{Unstable critical elasticae for the Euler functional}

In this subsection, we prove the following statement.

\begin{proposition}\label{unst}
Bernoulli's closed elastica passed several times is not
stable.
\end{proposition}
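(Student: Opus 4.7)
The plan is to show that the Euler functional $U$ admits an admissible direction of strict decrease at the $n$-fold Bernoulli elastica ($n\ge 2$), thereby contradicting its being a local minimum. First I would write the $n$-cover in Gauss form as $\alpha_n(s)=\tilde\alpha(s\bmod T)$ with $T=2\pi/n$, where $\tilde\alpha$ is one period of the pendulum equation normalized so that each copy closes (i.e.\ $\int_0^T\cos\tilde\alpha\,ds=\int_0^T\sin\tilde\alpha\,ds=0$, as guaranteed by Proposition~\ref{uniqueness}). Then I would compute the second variation of $U$ under the linearized constraints~(\ref{e2}): passing to a Lagrangian $L=U-\mu_1\int\cos\alpha-\mu_2\int\sin\alpha$ and pinning down $\mu_1,\mu_2$ via Corollary~\ref{pendulum}, the Hessian reduces to the quadratic form
$$
Q(\beta)=\int_0^{2\pi}\dot\beta^2\,ds-\omega^2\int_0^{2\pi}\cos\alpha_n\,\beta^2\,ds
$$
on the subspace of $\beta$ satisfying $\beta(0)=\beta(2\pi)$ and $\int\sin\alpha_n\,\beta=\int\cos\alpha_n\,\beta=0$.

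Next I would exploit the $\mathbb{Z}/n$-symmetry to enlarge the space of admissible variations. For $n=2$, any $\beta$ satisfying $\beta(s+\pi)=-\beta(s)$ automatically fulfils both integral constraints by cancellation between the two copies, and also $\beta(0)=\beta(2\pi)$; for $n\ge 3$ the same trick works after modulation by a nontrivial character of $\mathbb{Z}/n$. This antiperiodic family is strictly larger than the one available in the once-passed case, so I would pick a test function inside it that concentrates $\beta^2$ on the portion of a single period where $\cos\tilde\alpha$ is largest (near $\tilde\alpha=0$) while varying slowly enough that $\int\dot\beta^2$ stays small. Once $Q(\beta)<0$ is established, an implicit-function-theorem correction projects $\alpha_n+h\beta$ back onto the exact constraint surface by an $O(h^2)$ adjustment that changes $U$ by only $O(h^3)$, giving a smooth family $\gamma_h$ with $U(\gamma_h)<U(\gamma_0)$ for all small $h\neq 0$.

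The hard part is the sign verification. Because the once-passed figure-eight \emph{is} a local minimum (Theorem~\ref{shapes}(iii)), $Q$ is necessarily nonnegative on variations that descend to the single-cover problem, so the negativity must come entirely from the antisymmetric directions unlocked by the cover. I would attack the resulting inequality by using the pendulum first integral $\dot{\tilde\alpha}^2=2\omega^2(\cos\tilde\alpha-\cos\alpha_0)$ together with integration by parts, reducing it to an explicit inequality in the amplitude $\alpha_0\in(\pi/2,\pi)$ characterising the Bernoulli elastica. An alternative, more geometric route would be to build the deformation directly: glue two adjacent copies of the figure-eight into a single larger loop via a one-parameter family of closed $C^2$ curves whose length and winding number are preserved, and check monotonicity of $U$ along this family using the scaling law $U\propto 1/L$ for a single Bernoulli shape.
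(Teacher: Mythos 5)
Your framework is sound and genuinely different from the paper's. The constrained Hessian $Q(\beta)=\int_0^{2\pi}\dot\beta^2\,ds-\omega^2\int_0^{2\pi}\cos\alpha_n\,\beta^2\,ds$ is the correct second variation (the Lagrange-multiplier bookkeeping checks out, with the Jacobi operator $-\tfrac{d^2}{ds^2}-\omega^2\cos\alpha_n$), and your observation that Bloch-type variations $\beta(s+T)=\zeta\beta(s)$ with $\zeta^n=1$, $\zeta\ne 1$, automatically satisfy the linearized closure constraints~(\ref{e2}) is correct and is exactly the right way to exploit the $n$-fold cover. The paper proceeds quite differently: it avoids the second variation altogether and exhibits an explicit $C^1$ competitor $\Gamma_\varepsilon$ assembled from translated loops of the Bernoulli elastica and straight segments, shorter than the $n$-fold elastica and with pointwise no larger curvature, whence $U(\Gamma_\varepsilon)<U(\Gamma_0)$ by direct comparison.

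However, your argument has a genuine gap precisely at the step you yourself flag as ``the hard part'': you never establish $Q(\beta)<0$ for a single admissible $\beta$, and the heuristic you offer for producing one does not work as stated. An antiperiodic $\beta$ on a period $T$ must change sign, so $\int_0^T\dot\beta^2\ge(\pi/T)^2\int_0^T\beta^2$; you cannot simultaneously concentrate $\beta^2$ near $\tilde\alpha=0$ and keep $\int\dot\beta^2$ small, and moreover $\cos\tilde\alpha$ is \emph{negative} on a sizable portion of each period (the Bernoulli amplitude exceeds $\pi/2$), so the term you hope to win with has the wrong sign on part of the domain. Whether $\omega^2\int\cos\tilde\alpha\,\beta^2$ can beat $\int\dot\beta^2$ is a genuinely quantitative question, and neither of your two proposed attacks is carried out (the second, ``geometric'' alternative is essentially the paper's construction, again with the decisive monotonicity unchecked). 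The clean way to close the gap is structural rather than computational: $\dot{\tilde\alpha}$ (the curvature) is a $T$-periodic kernel element of the Hill operator $-\tfrac{d^2}{ds^2}-\omega^2\cos\tilde\alpha$ with exactly two zeros per period, so by the oscillation/interlacing theorem for Hill's equation the eigenvalue $0$ is $\lambda_1$ or $\lambda_2$ of the periodic problem, and therefore the two lowest antiperiodic eigenvalues satisfy $\mu_1\le\mu_2<\lambda_1\le 0$; since every nontrivial Bloch sector meets the first spectral band $[\lambda_0,\mu_1]$, each such sector contains a strictly negative direction for $n\ge 2$. Without this (or an equivalent verified inequality in $\alpha_0$), your proposal does not yet prove instability.
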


{\bf Construction of $\Gamma_\varepsilon$}. Consider the loop of
the closed elastica with center $O_1$ at the origin with
nonpositive first coordinate. Construct another loop of
Bernoulli's closed elastica centrally symmetric to the first one
and tangent to the first loop at the point $A$ with
first coordinate  $-\varepsilon/2$ and positive second
coordinate (see Figure~$3$,
left). Denote the center of the second loop by $O_2$. Find the
point $O_3$ on the $OY$-axis such that the line $O_2O_3$ touches
the upper branch of the elastica at $O_2$ and connect $O_3$
with $O_2$ by a line segment (Figure~$3$,
right).

\begin{figure}[h]
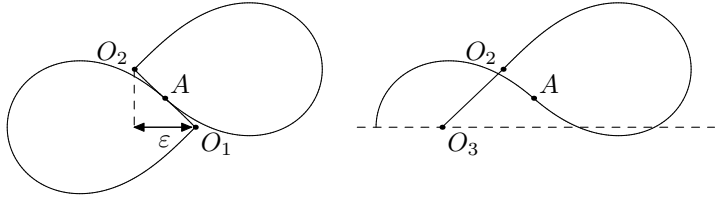

$$
\epsfbox{example.1}\quad\epsfbox{example.2}
$$
\caption{Preliminary steps to construct
$\Gamma_\varepsilon$.}\label{example}
\end{figure}

\medskip
Finally, add the symmetric picture about the $OY$-axis and add one
more Bernoulli closed elastica loop centered at $O_3$, as on
Figure $4$,
 left. Denote the symmetric point to
$O_2$ by $O_4$.

\begin{figure}[h]
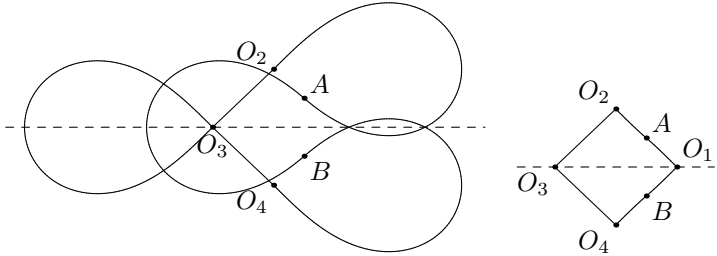

$$
\epsfbox{example.3}\quad\epsfbox{example.4}
$$
\caption{The embedding $\Gamma_\varepsilon$ and its difference
from the union of four Bernoulli's closed elastica
loops.}\label{example2}
\end{figure}

{\it Remark.} The immersion $\Gamma_\varepsilon$ is not
$C^{\infty}$-smooth but only $C^1$-smooth. We omit standard smoothing
procedures here.

\vspace{2mm}

{\noindent {\it Proof of Proposition~\ref{unst}.} Let us briefly
estimate the difference in lengths and curvatures between
$\Gamma_\varepsilon$ and the double Bernoulli closed elastica
(i.e. the union of the four loops of Bernoulli's closed elastica).}

First, we compare the lengths. If we replace two segments $O_2O_3$
and $O_3O_4$ in $\Gamma_\varepsilon$ by the curves $O_1AO_2$ and
$O_3BO_4$ constructed via the corresponding parts of Bernoulli's
closed elastica (Figure~$4$,
right), then the length
will be exactly equal to twice the length of the closed
elastica. Notice that the angle $O_2O_1O_3$ is less than the angle
$O_2O_3O_1$, since all the absolute values of the derivatives at
points of the closed elastica are greater than
$\tan(O_2O_3O_1)$ almost everywhere. Thus the length of $O_1O_2$ is
greater than the length of $O_3O_2$. Hence the length of $O_3O_2$ is less
than the length of the curve $O_1AO_2$. For the same reason, the
length of $O_3O_4$ is less than the length of the curve
$O_1BO_4$. Therefore, the length of $\Gamma_\varepsilon$ is
smaller than the length of the double Bernoulli closed
elastica. The curvature of the Bernoulli closed elastica part of
$\Gamma_\varepsilon$ (the bold $\infty$-curve in Figure~5)
coincides with the curvatures of the
corresponding points on the double  Bernoulli's closed elastica.
The curvature at points on the additional segments are equal to zero.
Since the lengths are smaller and the curvatures at the
corresponding points are not greater, for the total Euler
functional we have
$$
U(\Gamma_\varepsilon)<U(\Gamma_0).
$$
Hence, the double  Bernoulli's closed elastica is a saddle point
of the configuration space of all immersions of $0$ index.

\begin{figure}[h]
$$
\epsfbox{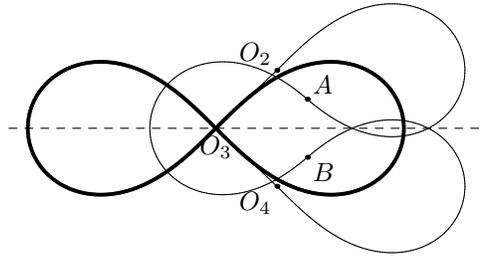}
$$
\caption{The immersion $\Gamma_\varepsilon$ for n-tuple
Bernoulli's closed elasticae.}\label{example3}
\end{figure}

In the case of an $n$-tuple Bernoulli closed elastica for $n>2$, we add
$(n-2)$-tuple  Bernoulli closed elasticae to all the immersions
of the above construction of the energy reducing deformation for the
double Bernoulli closed elastica, as it is shown on
Figure~$5$.
The $(n-2)$-tuple Bernoulli closed
elastica remains unchanged during the whole deformation. \qed

\subsection{Proof of Theorem~\ref{shapes} and Theorem~\ref{main}}
In this subsection, we conclude the proofs of the main theorems.

\subsubsection{Proof of Theorem~\ref{shapes}} {\bf (i)}
By Proposition 4.3,
the Gauss representation of any critical
elastica is twice differentiable at any point. By
Corollary~\ref{pendulum} (see also Remark~\ref{discussion}) all
critical elasticas with twice differentiable Gauss representation
are either circles or satisfy the equation of the simple pendulum.
By Proposition  4.10,
all critical elasticae whose
Gauss representation satisfy the equation of the simple pendulum
are homothetic to Bernoulli's closed elastica (see
Remark~\ref{BL}). The first item is proved.

\smallskip
{\bf (ii)} We have proved this item in Proposition~\ref{unst}.

\smallskip

{\bf (iii)} The first two items of this theorem imply that each
connected component has a unique stable critical point.  Hence
these unique critical points are local minima. Recall that in the
case of a nonzero Whitney index, the corresponding curve
is a circle passed a number of
times equal to the Whitney index (the orientation is determined by
the sign of the index). If the Whitney index is zero, then the
critical elastica is the Bernoulli closed $\infty$-shaped elastica. This
concludes the proof of Theorem~\ref{shapes}. \qed

\subsubsection{Proof of Theorem~\ref{main}} Theorem~\ref{shapes}
implies that each regular homotopy class of regular curves
contains a unique normal form. \qed

\medskip
\noindent {\bf Acknowledgements.} We are grateful to Mikhail
Zelikin for an illuminating informal talk about Euler elasticae,
to Alexander Demidov for pointing out the references [18] and
[19], and to Robert Goldstein for a useful discussion of the
present paper. Oleg Karpenkov is supported by the Austrian Science
Fund (FWF), grant M~1273-N18. Alexey Sossinsky is partially
supported by the RFBR-CNRS-a grant \#10-01-93111 and the RFBR
grant \# 12-01-00748-a.

\end{document}